\title{Knowledge in multi-robot systems: an interplay of dynamics, computation and communication}
\titlerunning{Knowledge in multi-robot systems}
\author{Giorgio Cignarale}{TU Wien, Vienna, Austria}{giorgio.cignarale@logic.at}{0000-0002-6779-4023}{}
\author{Stephan Felber}{TU Wien, Vienna, Austria}{stephan.felber@tuwien.ac.at}{0009-0003-6576-1468}{}
\author{Eric Goubault}{École Polytechnique, Palaiseau, France}{eric.goubault@lix.polytechnique.fr}{0000-0002-3198-1863}{}
\author{Bernardo Hummes Flores}{École Polytechnique, Palaiseau, France}{bernardo.hummes-flores@lix.polytechnique.fr}{0000-0003-2325-1497}{}
\author{Hugo Rincon Galeana}{TU Berlin, Berlin, Germany}{hugorincongaleana@gmail.com}{0000-0002-8152-1275}{}
\authorrunning{G. Cignarale, S. Felber, E. Goubault, B. Hummes Flores, H. Rincon Galeana}
\keywords{distributed computing, mobile robotics, temporal-epistemic logic, switched system, robot tasks}
\newtheorem{assumption}[theorem]{Assumption}
\Crefname{assumption}{assumption}{assumptions}
\renewcommand{\geq}{\geqslant}
\renewcommand{\epsilon}{\varepsilon}
\renewcommand{\emptyset}{\varnothing}
\newcommand{\N}{\mbox{$\mathbb{N}$}}
\newcommand{\R}{\mbox{$\mathbb{R}$}}
\newcommand{\floors}[1]{\left\lfloor #1 \right\rfloor}
\renewcommand{\hat}[1]{\widehat{#1}}
\newcommand{\demph}[1]{\textbf{#1}} % Definitional emphasis
\newcommand{\remph}[1]{{#1}} % {\emph{#1}} Rhetorical emphasis
\newcommand{\syslcm}{\ensuremath{\mathcal{C}}} % computational system
\newcommand{\sysdyn}{\ensuremath{\mathcal{H}}} % hybrid system
\newcommand{\disc}{\ensuremath{\Phi}}          % discretization map
\newcommand{\onc}{\ensuremath{\mathbf{x}}}  % ontic state
\newcommand{\epi}{\ensuremath{\mathbf{e}}}  % epistemic state
\newcommand{\fonc}{\ensuremath{f}}          % ontic evolution
\newcommand{\fepi}{\ensuremath{\varphi}}    % epistemic evolution
\newcommand{\obs}{\ensuremath{\mathbf{y}}}  % observation
\newcommand{\ctl}{\ensuremath{\mathbf{u}}}  % control
\newcommand{\gobs}{\ensuremath{g}}          % observation generation
\newcommand{\gctl}{\ensuremath{h}}          % control generation
\newcommand{\ONC}{\ensuremath{\mathbf{X}}}  % distributed ontic state
\newcommand{\EPI}{\ensuremath{\mathbf{E}}}  % distributed epistemic state
\newcommand{\FONC}{\ensuremath{F}}          % distributed ontic evolution
\newcommand{\FEPI}{\ensuremath{\Phi}}       % distributed epistemic evolution
\newcommand{\OBS}{\ensuremath{\mathbf{Y}}}  % distributed observation
\newcommand{\CTL}{\ensuremath{\mathbf{U}}}  % distributed control
\newcommand{\GOBS}{\ensuremath{G}}          % distributed observation function 
\newcommand{\GCTL}{\ensuremath{H}}          % distributed control function
\newcommand{\per}{\ensuremath{\hat{\onc}}}  % perception
\newcommand{\com}{\ensuremath{\hat{\epi}}}  % communication
\newcommand{\gper}{\ensuremath{\eta}}       % perception function
\newcommand{\gcom}{\ensuremath{\lambda}}    % communication function
\newcommand{\OBLOT}{\ensuremath{\mathcal{OBLOT}}}  % oblivious robot model
\newcommand{\LUMI}{\ensuremath{\mathcal{LUMI}}}    % luminous robot model
\newcommand{\LCM}{\textsc{LCM}}
\newcommand{\look}{\textsc{look}}
\newcommand{\compute}{\textsc{compute}}
\newcommand{\move}{\textsc{move}}
\newcommand{\robot}{\ensuremath{\mathbf{r}}}    % robot state
\newcommand{\robots}{\ensuremath{\Pi}}          % set of robots
\newcommand{\env}{\ensuremath{\mathcal{E}}}     % environment state
\newcommand{\adv}{\ensuremath{\mathcal{B}}}     % adversary
\newcommand{\tpath}{\ensuremath{p}}             % time path
\newcommand{\run}{\ensuremath{\sigma}}          % system run
\newcommand{\runs}{\ensuremath{\Sigma}}         % system run
\newcommand{\algo}{\ensuremath{\mathcal{A}}}    % algorithm
\newcommand{\cfg}{\ensuremath{\mathcal{C}}}     % global configuration
\newcommand{\propspace}{\ensuremath{\text{P}_{X}}}
\newcommand{\propcyl}{\ensuremath{\text{P}_{X\times[0,1]}}}
\newcommand{\xp}{\ensuremath{\text{sp}}}
\newcommand{\found}{\ensuremath{\text{FOUND}}}
\newcommand{\secure}{\ensuremath{\text{SECURE}}}
\newcommand{\propsurv}{\ensuremath{\text{P}_{Surv(X)}}}
\newcommand{\trace}{\ensuremath{\text{tr}}}
\newcommand{\propobs}{\ensuremath{\text{P}_{g(X)}}}
\newcommand{\proptopo}{\ensuremath{\text{P}_{\tau}}}
\newcommand{\proprdvz}{\ensuremath{\text{P}_{rz(X)}}}
\begin{document}
\maketitle

\begin{abstract}
  In this paper, we provide a framework integrating distributed multi-robot systems and temporal epistemic logic. We show that continuous-discrete hybrid systems are compatible with logical models of knowledge already used in distributed computing, and demonstrate its usefulness by deriving sufficient epistemic conditions for exploration and gathering robot tasks to be solvable. We provide a separation of the physical and computational aspects of a robotic system, allowing us to decouple the problems related to each and directly use methods from control theory and distributed computing, fields that are traditionally distant in the literature. Finally, we demonstrate a novel approach for reasoning about the knowledge in multi-robot systems through a principled method of converting a switched hybrid dynamical system into a temporal-epistemic logic model, passing through an abstract state machine representation. This creates space for methods and results to be exchanged across the fields of control theory, distributed computing and temporal-epistemic logic, while reasoning about multi-robot systems.
\end{abstract}

\section{Introduction} % {{{
\label{sec:Intro}
% Intro {{{
We formulate in this paper a framework connecting a hybrid systems approach to distributed multi-robot systems and temporal epistemic logic. 
This forms a bridge whose value is twofold: we bring the success of logical analysis of tasks in distributed computing \cite{bookof4} to distributed robotics, and we generalize the multi-robot model from state machines to dynamical systems \cite{PostoyanFrascaPanteleyZaccarian24}, drawing a much closer frontier to real-life applications.

The distributed coordination of robots combines problems found in control theory \cite{robot-control,khalil} with the problems commonly found in distributed computing \cite{lynch}.
By clearly separating the roles of the physical and computational parts of a robot, we are capable of showing a translation layer from distributed computing and control theory.
While robotics is traditionally concerned with physical entities that must behave in accordance to their dynamical constraints, the distributed computing perspective arises once the robots must exchange information under varying difficulties associated to their communication capacities \cite{AlcantaraCastanedaFlores-PenalozaRajsbaum19,SN17}. For instance, understanding the impact of synchrony, coordination and message loss are essential for the efficient design and analysis of state-of-the-art multi-robot systems solving various tasks. 

Our modeling is extended for the multi-robot scenario, where we introduce the notion of a scheduler as a path in the space of executions, so to tie local to global time.
Bridging these fields enables us to combine well-known methods and results from control theory, distributed computing and temporal epistemic logic in a simple and harmonious fashion. 
Our framework successfully captures the spatial information used to reason about multi-robot tasks, namely exploration and gathering, as it is constructed with the memory of the robots during their computations.
We devise a logical model consisting of a variation of the runs and systems framework, which we construct by defining explicitly a distributed state machine for both the system dynamics and the robot protocols.
We further demonstrate the usefulness of our approach by deriving the first ever sufficient epistemic conditions for solving the exploration, surveillance and approximate gathering robot tasks.

In \cite{AlcantaraCastanedaFlores-PenalozaRajsbaum19} the authors famously showed that exact gathering is impossible by reducing the robot task to the consensus problem in distributed computing and deriving its well understood impossibility borders from there. In essence, the present paper puts this reduction on a sound footing by deriving the distributed system implied by the control theoretic model and equipping it with the tools found in epistemic logic.

The distributed computing community has approached the characterization of tasks in distributed robot systems \cite{FlocchiniPrencipeSantoro19} in various ways, such as according to the system space (continuous or discrete)\cite{BalabonskiCourtieuPelleRiegTixeuilUrbain19,DEmidioDiStefanoFrigioniNavarra18}, communication capacities\cite{TeraiWadaKatayama23} and synchronicity\cite{KirkpatrickKostitsynaNavarraPrencipeSantoro24}. Model checking using linear temporal logic is also a common tool in this computational perspective of robotics, where it has been used for gathering \cite{DoanBonnetOgata18}, perpetual exploration \cite{DoanBonnetOgata17,BerardLafourcadeMilletPotop-ButucaruThierry-MiegTixeuil16} and terminating exploration \cite{BerardLafourcadeMilletPotop-ButucaruThierry-MiegTixeuil16}.

The control theoretic view on multi-agent systems has been developed by a number of authors, see e.g. the survey \cite{Heemels}. Similarly, logical views on hybrid systems have been considered, see e.g. \cite{Platzer}. As we mentioned also, logics, in particular epistemic logic \cite{HM90} and temporal-epistemic logic \cite{Knight} have been developed and applied to multi-agent systems, through in particular epistemic planning \cite{FabianoSrivastavaGanapiniLenchnerHoreshRossi}.

Epistemic logic \cite{Hin62} is a powerful conceptual tool for reasoning about the uncertainty of agents in a system, both regarding facts and other agents' epistemic states, making it especially suitable for the modeling of multi-agent systems, including distributed systems. It was adapted for and introduced to distributed systems in \cite{HM90}, subsequently employed to characterize: the firing rebels with relay in \cite{FireTark}, two agent approximate consensus in \cite{LRA20}, sufficient conditions for stabilizing consensus in \cite{crgfessli2024}, optimal consensus with respect to message transmission in\cite{GM20} and different notions of knowledge in \cite{M16}, to name just a few. A crucial feature of epistemic logic in multi-agent systems is that it provides a precise and formal representation of higher order epistemic attitudes, such as nested and/or common knowledge, often crucial for the success of distributed tasks. For instance, coordinated actions are inevitably tied to nested knowledge, i.e., any agent needs to know to some extent what other agents know in order to successfully execute a coordinated behavior~\cite{M16}. Another interesting feature of epistemic logic is that, when combined with temporal logic, it allows to express properties that must hold (and be known) throughout the execution, without focusing on the specific communication mechanisms leading to such knowledge. Thus, epistemic logic is a universal vehicle for expressing communication models, including those that rely on lack of explicit communication (also known as communication by silence) \cite{FireTark,GM20}.
Our approach is the first one to seamlessly integrate these three complementary views: the ones from control theory, distributed systems and temporal epistemic logic.
% }}}

\textbf{Contributions.} % {{{
We present, to the extent of our knowledge, the first interdisciplinary framework that connects all three fields of control theory, distributed computing, and temporal epistemic logic, in a general setting that enables a comprehensive and integral approach to multi-robot systems. We showcase the potential of our framework by defining a simple yet sufficiently expressive temporal epistemic language that allows us to express multi-layered knowledge of space. In particular, this enables a precise formalization for the completion of spatial robotic tasks, such as exploration and approximate gathering. More precisely, our main contributions are:

\begin{itemize}
    \item We define a robot abstraction representing multi-robot system (\Cref{def:multi-robot-dynamics}) executions compatible with to both the control theory (\Cref{def:robot-ode}) and distributed computing (\Cref{def:robot-state-machine,def:env-state-machine}) frameworks.
    \item We provide a temporal epistemic logic framework (\Cref{def:epi-frame}), based on the runs and systems framework, for expressing robot exploration as knowledge of atomic propositions that represent open sets of a compact topological space.
    \item We provide an epistemic analysis of exploration (\Cref{def:epi_cond}), and approximate gathering (\Cref{def:term_approx_gathering}). Furthermore, we characterize exploration with cooperative termination (\Cref{thm:termequiv}), and we prove that the sufficient conditions to solve approximate gathering mimic the sufficient conditions for the stabilizing agreement problem in distributed systems (\Cref{thm:stabagapproxgather}).
\end{itemize}

% contributions }}}

\textbf{Paper Organization.} % {{{
In \Cref{sec:robot-model} we introduce the basic concepts used for modeling multiple robots: first as a system of ordinary differential equations (\Cref{sec:robot-as-dynsys}), then as state machines (\Cref{sec:robot-as-computer}). In \Cref{sec:robot-executions}, we introduce time paths as a geometric model relating global and local times (\Cref{sec:scheduler}), followed by their usage in the definitions of runs in both dynamical (\Cref{sec:runs-dyn}) and computational models (\Cref{sec:runs-lcm}). We conclude by showing that the computational content of these runs match (\Cref{sec:correspondence}). \Cref{sec:task-tel}, the main contribution of the paper, connects the robotic abstraction to the epistemic modeling by turning the abstract robot state machine into a suitable epistemic frame. The latter is enriched into an epistemic model (\Cref{sec:logexplore}), which is then used for the epistemic analysis of robot tasks such as simple exploration (\Cref{sec:LCMexplore}) and approximate gathering (\Cref{sec:gathering}). Finally, concluding thoughts are provided in \Cref{sec:concl}.
% organization }}}

% section }}}

\section{The Robot Model} % {{{
\label{sec:robot-model}

In this section, we introduce two multi-robot system models with distinct degrees of abstractions, and subsequently show one model to be an abstraction of the other.
The first is a description of robots as a switched hybrid dynamical system, close to models found in the control theory literature \cite{Jaulin19}. The second is a state machine executing \textsc{look-compute-move} cycles, an abstraction commonly found in the distributed computing community \cite{FlocchiniPrencipeSantoro19}.

\subsection{Robots as Dynamical Systems} % {{{
\label{sec:robot-as-dynsys}

% one robot {{{
From the dynamical systems perspective, a robotic system can modeled via differential equations that describe the evolution of its state over time. Hybrid systems combine a continuous state space, associated to the physical features of a robot, with a discrete space, more natural to capture computations. We start by describing the model for a single robot.

\begin{definition}[Robot]\label{def:robot-ode}
  A \demph{robot} is a hybrid system, composed of an \demph{ontic state} \(\onc\) and \demph{epistemic state} \(\epi\). It has actuators that evolve its ontic state and sensors that perform observations. It has an intelligence and a memory that generate new epistemic states and control commands.
  \begin{align}
    \dot{\onc}(t) &= \fonc(\onc(t),\ctl[t]) &&\text{(ontic evolution)} \label{eq:f-ontic}\\
    \obs(t) &= \gobs(\onc(t)) &&\text{(observation)} \label{eq:g-observation}\\
    \epi[t+1]  &= \fepi(\epi[t],\obs[t]) &&\text{(epistemic evolution)} \label{eq:f-epistemic}\\
    \ctl[t] &= \gctl(\epi[t],[t]) &&\text{(control)} \label{eq:g-control}
  \end{align}
  \noindent The ontic evolution $\fonc$ and the observation $\gobs$ occur in continuous time \(t\), while the epistemic evolution $\fepi$ and the control $\gctl$ occur in discrete time \([t]\), defined by the staircase function \(\floors{t}\).
\end{definition}

The ontic state corresponds to the physical state of the robot, while the epistemic state corresponds to the memory state of the robot. The distinction between the ontic and epistemic states allows for physical and computational problems to be described separately. As such, a robot's behavior is given by a (hybrid) switched system \cite{Liberzon03}, a particular case of hybrid system, as the control $\ctl[t]$ is possibly discontinuous over time. 

Note that \Cref{def:robot-ode} also encompasses more traditional models of robots in the control-theoretic literature, such as output-feedback controllers \cite{khalil}. This can be achieved by using the epistemic evolution \(\fepi\) as a simulator that imitates the dynamics of the model. 

\begin{remark}[Discrete Observations]\label{rem:continuoustodiscrete}
  Due to the physical limitations in memory of any physical computing device, we assume that the observation function \(\gobs\) has a finite domain \(\mathcal{D}(g)\), which corresponds to the digital representation of the continuous quantities it is capable of measuring.
\end{remark}
% end one robot }}}

% multiple robots % {{{
We now consider the modifications required for modeling a multi-robot system where interactions increase in complexity with the added notions of \remph{communication} and \remph{perception} among robots. Note that the distinction between communication and perception is a consequence of the distinction between ontic and epistemic states, as both are required to represent the possible information to be exchanged. For two robots \(\robot\) and \(\robot'\), they are defined via a pair of functions.
\begin{align}
  \per_{\robot \robot'}(t) &= \gper_{\robot}(\onc_{\robot'}(t)) &&\text{(perception)} \label{eq:per}\\
  \com_{\robot \robot'}[t] &= \gcom_{\robot}(\epi_{\robot'}[t]) &&\text{(communication)} \label{eq:com}
\end{align}

The perception is \(\per_{\robot,\robot'}\) is the estimation of the ontic state of robot \(\robot'\) by robot \(\robot\). Similarly, The communication function generates the estimation \(\com_{\robot,\robot'}\) of the epistemic state of robot \(\robot'\) by robot \(\robot\). Moreover, the introduction of multiple robots operating in the system brings forward the notions of "local" and "global" execution time, which will be formalized in \Cref{sec:scheduler}.

Communication and perception complement the observation step with information about the other robots. These multiple pieces of information are later processed together by the robot's epistemic evolution, matching the intuition of observations as simply inputs to a computing system. This leads to the definition of a hybrid multi-robot system.

\begin{definition}[Hybrid multi-robot system]\label{def:multi-robot-dynamics}
  A \demph{hybrid multi-robot system} \(\sysdyn\) consists of \(N\) robots, capable of observing the environment and interacting with the other robots through perception and communication. For a given sequence of activations $\tpath$, its evolution is given by:
    \begin{align}
        \dot{\ONC}(t) &= \FONC_{\tpath}(\ONC(t),\CTL[t]) &&\text{(global ontic evolution)} \label{eq:gbl-f-ontic}\\
        \OBS(t) &= \GOBS(\ONC(t), \gper(\ONC(t)),\gcom(\EPI[t])) &&\text{(multi-robot observation)} \label{eq:gbl-g-multi-observation}\\
        \EPI[t+1]  &= \FEPI(\EPI[t],\OBS(t)) &&\text{(global epistemic evolution)} \label{eq:gbl-f-epistemic}\\
        \CTL[t] &= \GCTL(\EPI[t],t) &&\text{(global control)} \label{eq:gbl-g-control}
    \end{align}
\end{definition}

This definition is a consequence of the evolution of single robots alongside the interaction functions, as in \Cref{def:robot-ode,eq:per,eq:com}. We now consider a generic scheduler that provides the set of active robots at each time \(t\), defining the modes of the switched system, and that will be made precise with the introduction of time paths in \Cref{sec:scheduler}.
% end multiple robots }}}
% end robots as dynamical systems }}}

\subsection{Robots as Mobile Computing Systems} % {{{
\label{sec:robot-as-computer}

% LCM % {{{
We provide now a state machine abstraction that allows us to focus on the computational component of the robots in \Cref{def:robot-ode}. Robots are algorithmically described as mobile entities that follow a cycle of \textsc{look-compute-move} operations as they interact with the environment. This is known as the \LCM\ cycle\cite{FlocchiniPrencipeSantoro19}:

\begin{enumerate}
  \item \look. The robot retrieves a snapshot of the ontic state of other robots in the environment.
  \item \compute. The robot computes its next epistemic state from the information in \look.
  \item \move. The robot updates its physical state based on its current epistemic state.
  \item \textsc{wait}. The robot does not perform any action, but remains observable.
\end{enumerate}

All robots operate in the same environment, e.g. a graph or continuous region of space. Uncertainty in the behavior of each step, such as the accuracy of the information or the actual moved distance, are possible choices for adversarial behavior in the \LCM\ model. Some variations are depicted in Appendix~\ref{apx:lcm}. 

Two models for robots running \LCM\ cycles are predominant in the literature: \remph{oblivious}, introduced in \cite{SuzukiYamashita99}, and \remph{luminous}, introduced in \cite{DasFlocchiniPrencipeSantoroYamashita16}. They contrast in the means for communication and memory made available to the robots. Robots under the oblivious robot model, \OBLOT, have no memory and can only obtain information from observing other robots in the \look\ step. Robots under the luminous robot model, \LUMI, are enhanced with a set of programmable lights that can be user for both (indirect) communication and storing information for future \compute\ steps. The lights are a physical element representing the exposed memory of a robot, and hence can be observed with the rest of the ontic state of a robot during \look\ operations and are updated during a \move\ operation. We will use the \LUMI\ model, as we require the lights to handle both communication and perception. We instantiate the robot \(\robot\) in an environment \(\env\) executing \(\algo\) as follows.

% LCM }}}

% State machines{{{
\begin{definition}[Robot state machine] \label{def:robot-state-machine}
  A \remph{robot} \(\robot\) executes the state machine\\ 
  \(\algo = \langle S_{\epi_{\robot}}, S_{\obs_{\robot}}, S_{\ctl_{\robot}}, \gobs_{\robot}, \fepi_{\robot}, \gctl_{\robot}\rangle\)
  where

  \begin{itemize}
    \item \(S_{\epi_{\robot}}\) is a finite set of possible memory states;
    \item \(S_{\obs_{\robot}}\) is a finite set of possible observations obtained from the environment \(\env\);
    \item \(S_{\ctl_{\robot}}\) is a set of actions that can be executed in the environment \(\env\);
    \item \(\gobs: \env \to S_{\obs_{\robot}}\) is the function that retrieves an observation from the current environment \(\env\);
    \item \(\fepi: S_{\epi_{\robot}} \times S_{\obs_{\robot}} \to S_{\epi_{\robot}}\) is a function taking an epistemic state and an observation to a new epistemic state;
    \item \(\gctl: S_{\epi_{\robot}} \to S_{\ctl_{\robot}}\) is a function that generates an action given an epistemic state.
  \end{itemize}
\end{definition}

The environment itself is dynamic and therefore also a state machine. It provides the observations used by the robots during their execution of \(\algo\) and evolves according to the robots' actions and some possible adversarial factor. It abstracts the ontic state of all robots as a single entity, with which all of the robot state machines interact.

\begin{definition}[Environment state machine] \label{def:env-state-machine}
  The \demph{environment} executes the state machine
  \(\env = \langle S_{\env}, I_{\adv}, \{S_{\ctl_{\robot}}\}_{\robot \in \Pi},\{S_{\obs_{\robot}}\}_{\robot \in \Pi}, \FONC_\env, \GOBS_\env \rangle\)
  where
  
  \begin{itemize}
    \item \(S_{\env}\) is a set of possible environment states;
    \item \(I_{\adv}\) is a set of possible choices made by the adversary;
    \item \(\{S_{\ctl_{\robot}}\}_{\robot \in \Pi}\) is a collection of possible robot actions;
    \item \(\{S_{\obs_{\robot}}\}_{\robot \in \Pi}\) is a collection of possible observations;
    \item \(\FONC_\env: S_{\env} \times \prod_{\robot \in \Pi} S_{\ctl_{\robot}} \times I_{\adv} \to S_{\env}\) is the function generating a new environment state from the latest actions and the adversarial interference;
    \item \(\GOBS_\env: S_{\env} \times I_{\adv} \to \prod_{\robot \in \Pi} S_{\obs_{\robot}}\) is the function that generates observations from the current environment state and possible adversarial interference.
  \end{itemize}
\end{definition}

\Cref{def:robot-state-machine} lets us focus on the computations executed locally, while \Cref{def:env-state-machine} abstracts the ontic states and updates of all robots in a single machine interacting with the individual computations. \look\ and \move\ are split into distinct functions at the interface between the environment and each robot: the data is ``prepared'' by \(\gctl_\robot\) and \(\GOBS_\env\), and ``consumed'' by \(\FONC_\env\) and \(\gobs_\robot\).

A \demph{computational multi-robot system} \(\syslcm\) is then a set of robots and one environment state machine, i.e., \(\syslcm = (\{\algo_{\robot}\}_{\robot \in \robots}, \env)\). This separation will facilitate stating the correspondence between computational and dynamical perspectives in \Cref{sec:correspondence}. 
% }}}

% end Robots as mobile computing systems }}}

% end The Robot Model }}}

\section{Schedulers and System Runs} % {{{
\label{sec:robot-executions}

In this section, we define a \remph{scheduler} (\Cref{sec:scheduler}) that dictates the possible activations of the distributed system. \remph{System runs} are thus introduced, in both the dynamical (\Cref{sec:runs-dyn}) and computational (\Cref{sec:runs-lcm}) definitions. The runs will be later used as building blocks to generate a general epistemic frame for multi-robot tasks in \Cref{sec:task-tel}.

\subsection{Scheduler: Global and Local Time} % {{{
\label{sec:scheduler}

The robots' order of activation is given by a \remph{scheduler}, which chooses the robot that will execute the next step of its cycle. Common schedulers found in the literature are the synchronous, where all steps happens simultaneously, semi-synchronous, where the synchronization is guaranteed at every full cycle, and asynchronous, where there is no shared notion of time. See \cite{FlocchiniPrencipeSantoro19} for more details on the taxonomy of schedulers. A scheduler that is not fully synchronous may lead to a global time that differs from the local time of the robots. 

We introduce a \remph{time path} to capture the relationship between the \remph{local} execution of the robots and a common \remph{global} time variable. A set of \(n\) robots induces a space of possible executions \(\R_{+}^{n}\), \LCM\ cycles are mapped to each sequential unit interval. Any monotonic increasing path rooted in the origin of this space defines a possible execution order of the system. Each coordinate of \(\R_{+}^{n}\) corresponds to the \remph{local time} of a robot, and the length of the path corresponds to the \remph{global time}. A robot is said to execute a step whenever a path's projection to the corresponding local time axis crosses said step's activation point. This leads to the following definition.

\begin{definition}[Time path]\label{def:time}
  For a system of \(n\) robots, a \demph{time path} is a differentiable map \(\tpath\), from \([0, T]\) to \(\R^{n}_{+}\), for some \(T \ge 0\), that has the following properties:
  \begin{itemize}
    \item \textbf{(initialisation)} \(\tpath(0) = (0,\dots,0)\).

    \item \textbf{(monotonicity)} For any \(t, t' \in [0,T]\) with \(t \le t' , \tpath_i(t) \le \tpath_i(t')\), for all components \(i \in [1,\dots,n]\).

    \item \textbf{(rectification)} For all \(t \in [0,T]\), the global time \(t\) is the maximum of all local times, i.e., \(t = max(|\tpath_{1}(t)-\tpath_{1}(0)|,|\tpath_{2}(t)-\tpath_{2}(0)|,\dots,|\tpath_{n}(t)~-~\tpath_{n}(0)|)\).
  \end{itemize}
\end{definition}

A time path defines the relative progress of every robot, where the monotonicity property makes time always go forwards, and the rectification property sets the global time to match the fastest local time. A \demph{scheduler} is then a family of time paths. For instance, a synchronous scheduler can only correspond to a diagonal straight line, where all operations are aligned. A semi-synchronous scheduler will have any of its possible time paths crossing all \compute\ steps simultaneously, where synchronization happens. In the case of instantaneous moves, as in \cite{AlcantaraCastanedaFlores-PenalozaRajsbaum19}, the behavior of robots would be undefined at points where a \look\ operation would happen at the same time as a \move\ operation, and consequently one should forbid access to points near the intersection of \move\ and \look\ operations. Such forbidden zones (represented as small squares) and two possible executions of a multi-robot system composed of two robots is illustrated in \Cref{fig:time-path}.

\begin{figure}[ht]
  \centering
  \begin{tikzpicture}[scale=.60]
  \node [left] at (0,7) {$[t]_2$};
  \node [below] at (13,0) {$[t]_1$};

  \node [left] at (0,1) {$\textsc{m}_0$};
  \node [left] at (0,2) {$\textsc{l}_0$};
  \node [left] at (0,3) {$\textsc{c}_0$};
  \node [left] at (0,4) {$\textsc{m}_1$};
  \node [left] at (0,5) {$\textsc{l}_1$};
  \node [left] at (0,6) {$\textsc{c}_1$};

  \node [below] at (1,0) {$\textsc{m}_0$};
  \node [below] at (2,0) {$\textsc{l}_0$};
  \node [below] at (3,0) {$\textsc{c}_0$};
  \node [below] at (4,0) {$\textsc{m}_1$};
  \node [below] at (5,0) {$\textsc{l}_1$};
  \node [below] at (6,0) {$\textsc{c}_1$};
  \node [below] at (7,0) {$\textsc{m}_2$};
  \node [below] at (8,0) {$\textsc{l}_2$};
  \node [below] at (9,0) {$\textsc{c}_2$};
  \node [below] at (10,0) {$\textsc{m}_3$};
  \node [below] at (11,0) {$\textsc{l}_3$};
  \node [below] at (12,0) {$\textsc{c}_3$};

  % move t1
  \draw[dash dot, cyan] (3,0) -- (3,6);
  \draw[dash dot, cyan] (6,0) -- (6,6);
  \draw[dash dot, cyan] (9,0) -- (9,6);
  \draw[dash dot, cyan] (12,0) -- (12,6);

  % move t2
  \draw[dash dot, cyan] (0,3) -- (12,3);
  \draw[dash dot, cyan] (0,6) -- (12,6);

  % look t1
  \draw[dotted,thin,red] (1,0) -- (1,6);
  \draw[dotted,thin,red] (4,0) -- (4,6);
  \draw[dotted,thin,red] (7,0) -- (7,6);
  \draw[dotted,thin,red] (10,0) -- (10,6);

  % compute t1
  \draw[dotted,thin,red] (2,0) -- (2,6);
  \draw[dotted,thin,red] (5,0) -- (5,6);
  \draw[dotted,thin,red] (8,0) -- (8,6);
  \draw[dotted,thin,red] (11,0) -- (11,6);

  % look t2
  \draw[dotted,thin,red] (0,1) -- (12,1);
  \draw[dotted,thin,red] (0,4) -- (12,4);

  % compute t2
  \draw[dotted,thin,red] (0,2) -- (12,2);
  \draw[dotted,thin,red] (0,5) -- (12,5);

  % Update/Scans

  \fill[gray] (0.8,1.8) rectangle  (1.2,2.2);
  \fill[gray] (1.8,0.8) rectangle  (2.2,1.2);
  \fill[gray] (0.8,4.8) rectangle  (1.2,5.2);
  \fill[gray] (1.8,3.8) rectangle  (2.2,4.2);
  \fill[gray] (3.8,4.8) rectangle  (4.2,5.2);
  \fill[gray] (4.8,3.8) rectangle  (5.2,4.2);
  \fill[gray] (3.8,1.8) rectangle  (4.2,2.2);
  \fill[gray] (4.8,0.8) rectangle  (5.2,1.2);
  \fill[gray] (6.8,1.8) rectangle  (7.2,2.2);
  \fill[gray] (7.8,0.8) rectangle  (8.2,1.2);
  \fill[gray] (6.8,4.8) rectangle  (7.2,5.2);
  \fill[gray] (7.8,3.8) rectangle  (8.2,4.2);
  \fill[gray] (9.8,4.8) rectangle  (10.2,5.2);
  \fill[gray] (10.8,3.8) rectangle  (11.2,4.2);
  \fill[gray] (9.8,1.8) rectangle  (10.2,2.2);
  \fill[gray] (10.8,0.8) rectangle  (11.2,1.2);

  \draw[ultra thick,red, rounded corners = 4pt] (0,0) -- (2.5,0.5) -- (3.5,2.5)  -- (5,3) -- (7.5,3.5) -- (8,5) -- (11,6);

  \draw[ultra thick, rounded corners = 4pt, blue] (0,0) -- (0.5,1.3) -- (5.2,1.4) -- (6,4.5) -- (8,4.5) -- (10.8,4.8) -- (12,6);

  \draw[->] (-0.5,0) -- (13,0);
  \draw[->] (0,-0.5) -- (0,7);
  \draw[->, ultra thick, blue] (11.5,5.5) -- (12.2,6.2);
  \draw[->, ultra thick, red] (9.5,5.5) -- (11,6);

  \end{tikzpicture}
  \caption{Two global time paths, blue and red, represent different possible executions of robots with local times \([t]_1\) and \([t]_2\). Each robot executes the sequence of actions (\textsc{M}) update ontic state, (\textsc{L}) observe environment and other robots, (\textsc{C}) update epistemic state. Each moment the time path crosses a dotted line, the associated robot executes the corresponding action. The squares are used to distinguish paths with different activation orders, as a \textsc{look} operation before and after a \textsc{move} provides different amounts of information. The actions are indexed by the cycle they belong to.}
  \label{fig:time-path}
\end{figure}
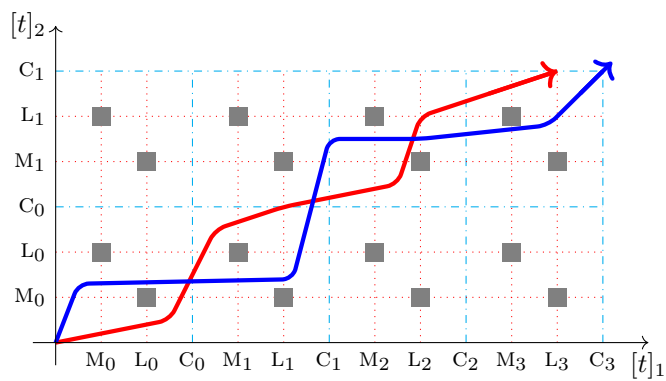

In the following, we use \(t\) for the \remph{continuous} time and \([t]\) for \remph{discrete} time, as per \Cref{def:robot-ode}. A subscript with the robot identifier is used to depict \remph{local} time, as in \(t_\robot\) or \([t]_\robot\), and its absence represents \remph{global} time. Given a time path $\tpath$, the local time of robot $\robot$ is given by \(t_\robot\) or \([t]_\robot\), depending on whether it is continuous or discrete.
% end Scheduler }}}

\subsection{Dynamical Runs of Distributed Multi-Robot System} % {{{
\label{sec:runs-dyn}

We attest now the uniform treatment of time provided by the scheduler by bridging the understanding of system runs of robots as dynamical systems (\Cref{sec:robot-as-dynsys}) similarly to that of computing systems (\Cref{sec:robot-as-computer}).

Given a time path \(\tpath\), the hybrid dynamics in global time coordinates \(\ONC(t) = \onc(\tpath_i(t))\) for every robot \(i\) satisfy 
\[
  \dot{\ONC}(t) = \FONC_{\tpath}(\ONC(t), \CTL[t]) = \langle \dot{\tpath}, \fonc(\ONC(t), \CTL[t]) \rangle
\]
where \(\langle \cdot, \cdot \rangle\) denotes the scalar product in \(\R^{n}\). This connect the scheduler to the physical evolution via the time path.

For simplicity, we assume that all robots live in a hypercube bounded by the $[0,1]$ interval \(X  \subseteq [0,1]^k\), with \(k \ge 1\), along with its standard Euclidean metric, i.e. $d(\Bar{x}, \Bar{y}) = \sqrt{\sum_{i=1}^{k} (x_i - y_i)^2 }$, where $\Bar{x} = (x_1, \ldots,x_k)$ and $\Bar{y}= (y_1, \ldots, y_k)$. We consider only the case of homogeneous multi-robot systems, equipped with identical capacities and behavior, but note that the described model does not limit to such case.

The (arbitrary) switched system of Definition \ref{def:multi-robot-dynamics} has a notion of solution, which will be used later as the dynamical system version of ``runs''. For this, we need to see our switched system as a more general differential inclusion: 

\begin{definition}[Solutions to differential inclusions\cite{Aubin}] \label{def:solutions} 
Consider the general differential inclusion \(\dot{x} \in \mathbb{F}(x)\) where $\mathbb{F}$ is a map from $\R^n$ to $\mathcal{P}(R^n)$, the set of subsets of $\R^n$. A function $x(\cdot): \mathbb{R}^+\rightarrow \mathbb{R}^n$ is a \demph{solution} to the inclusion if $x$ is an absolutely continuous function and satisfies for almost all $t\in\mathbb{R}$, $\dot{x}(t)\in \mathbb{F}(x(t))$.
\end{definition}

In general, there can be many solutions to a differential inclusion. Throughout the section we note $S_{\mathbb{F}}(x_0)$ the set of all (absolutely continuous) solutions to the differential inclusion.

In some cases, we can be more precise about the solution set. By the Filippov-Wa$\breve{\mbox{z}}$ewski theorem \cite{Liberzon03}, all solutions to the (closure of the) convexification of a differential inclusion 
can be approximated by solutions of the original differential inclusion with the same initial value, at least over a compact time interval, and under some simple hypotheses. And the solutions of the closure of the convexification of the original differential inclusion exist under simple hypotheses: 

We know from \cite{Aubin} that when $\mathbb{F}$ is a Marchaud map, then the inclusion has a solution such that $x(t_0)=x_0$ (for all $x_0$) and for a sufficiently small time interval $[t_0,t_0+\varepsilon)$, $\varepsilon > 0$. Global existence, for all $t\in \R$ can be shown provided $\mathbb{F}$ does not allow ``blow-up'' ($\|x(t)\|\rightarrow\infty$ as $t\rightarrow t^*$ for a finite $t^*$).

The solutions to the switched system in \Cref{def:multi-robot-dynamics} provide the dynamical runs, treated as a differential inclusion defined on time \([0,T]\), as follows.
The \demph{dynamical run} is then the pair of trajectories \((\ONC(t), \EPI[t])\) for \(t \in [0,T]\), representing the evolution of the global ontic and epistemic states generated by the switched system in \Cref{def:multi-robot-dynamics}.
% }}}

\subsection{Computational Runs of Distributed Multi-Robot Systems} % {{{
\label{sec:runs-lcm}

We now describe the robot runs looking only at local epistemic computations. We use the fact that the global and local times associated to a time path correspond precisely to the clock of the environment and of the robots, respectively, in \Cref{def:robot-state-machine,def:env-state-machine}.

\begin{definition}[Execution] \label{def:exec}
  Let \(\robot \in \Pi\) be a robot, with an epistemic evolution specified by \Cref{def:robot-state-machine}. Let a pair \(\langle \epi_\robot, \obs_\robot \rangle\) be a \demph{robot configuration}. We say that an \demph{execution} of \(\robot\) is a sequence \(\{\langle \epi_\robot[t]_\robot,\obs_\robot[t]_\robot \rangle\}\) indexed by its local time \([t]_\robot \in N\), where \(\epi_\robot[t]_\robot \in S_\robot\), \(\obs_\robot[t]_\robot \in S_\obs\), and \(\fepi_\robot(\epi_\robot[t]_\robot, \obs_\robot[t]_\robot) = \epi_\robot[t+1]_\robot\).
\end{definition}

The robot execution can now be extended to a computational \remph{system run}, its robot system analog. For this, we define first the \remph{global robot configurations}, which will take the role of possible worlds in the epistemic frame, later described in \Cref{sec:task-tel}.

\begin{definition}[Global robot configuration]\label{def:global_config}
  A \demph{global robot configuration} is a tuple \(\cfg = \langle \{\epi_\robot\}_{\robot \in \Pi},\{\obs_\robot\}_{\robot \in \Pi}, [t] \rangle\) consisting of all local robot configurations at a global computation time \([t]\).
  \(\cfg_\robot\) is the restriction of a global configuration \(\cfg\) to the  robot \(\robot\).
\end{definition}

\begin{definition}[System run]\label{def:run}
  A \demph{system run} \(\run\) from time \(0\) to \(T\), determined by time path \(\tpath\), is a sequence of global robot configurations \(\{\cfg[t]\}_{[t] \in [0,T]}\), where, for any \(t \in [0,T]\):
  \begin{itemize}
    \item there exists a subset \(\varnothing \neq P_{\tpath}(t) \subseteq \Pi\) of participating robots, activated by the time path \(\tpath\) at global time \(t\).
    \item we have \[\cfg_\robot[t+1] = \\ \begin{cases}
        \fepi_\robot (\cfg[t]_\robot) & \textrm{ if } \robot \in P_{\tpath}(t) \\
        \cfg[t]_\robot & \textrm{ if } \robot \notin P_{\tpath}(t)
      \end{cases}\]
  \end{itemize}
\end{definition}

We denote the set of all possible runs $\mathcal{I}$, representing all the possible evolutions of the system solely based on the robot's behavior. An execution correspond to the individual epistemic evolution of a single robot, while a run represent the epistemic evolution of all the robots.
Note that this definition of run abstracts away the ontic (physical) updates in similar fashion to how the robot state machine represents solely the robots' computations. We effectively focus on the epistemic states, while the ontic evolution related constraints and interactions are encompassed by the environment.

% end Execution of Distributed Multi-Robot Systems }}}

\subsection{Correspondence of the Dynamical and Computational Perspectives} % {{{
\label{sec:correspondence}

We establish now the conditions on which the computational model of a multi-robot system from \Cref{sec:robot-as-computer} is a faithful abstraction of the switched hybrid dynamical system of \Cref{sec:robot-as-dynsys}. We show that the computational content of the system runs are the same in both dynamic (\Cref{sec:runs-dyn}) and computational (\Cref{sec:runs-lcm}) models. This justifies using the runs from computational models as the basis for the epistemic frames (\Cref{sec:task-tel}), as they capture all of the information available in the robots' decision process.

We rely on a system abstraction procedure that converts a hybrid multi-robot system \(\sysdyn\) into a computational system \(\syslcm\). We now define this abstraction and specify under which assumptions it preserves the computational information.

\begin{definition}[System abstraction]\label{def:abstraction}
  Given a hybrid system \(\sysdyn\), a \demph{system abstraction} \(\disc\) is a tuple of surjective functions \(\disc = (\disc_\EPI, \disc_\OBS, \disc_\CTL, \disc_\ONC)\) that map the global state spaces as follows.
  \begin{itemize}
    \item \(\disc_\EPI : \EPI \to S_{\epi}\), partitions the global epistemic state space into a finite set of discrete memory states;

    \item \(\disc_\OBS : \OBS \to S_{\obs}\), maps the global observations into a finite set of discrete observations, as per \Cref{rem:continuoustodiscrete};

    \item \(\disc_\CTL : \CTL \to S_{\ctl}\), maps the global control commands to a finite set of global actions;

    \item \(\disc_\ONC : \ONC \to S_{\env}\), partitions the global ontic state space into a finite set of environment states.
  \end{itemize}
\end{definition}

\begin{lemma}[Induced run abstraction]\label{lem:induced-run}
  A system abstraction \(\disc : \sysdyn \to \syslcm\) induces a function \(\disc^{*}: \runs_{\sysdyn} \to \runs_{\syslcm}\) that maps hybrid runs to computational runs.
\end{lemma}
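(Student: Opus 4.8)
The plan is to build \(\disc^{*}\) by sampling a hybrid run at the discrete global times, pushing each sample through the abstraction maps, and then verifying that the resulting sequence is a legal computational run in the sense of \Cref{def:run}. Fix a hybrid run \(\run = (\ONC(t),\EPI[t])\) on \([0,T]\) together with its time path \(\tpath\). First I would recover the observation trajectory from the multi-robot observation rule~\eqref{eq:gbl-g-multi-observation}, namely \(\OBS(t) = \GOBS(\ONC(t),\gper(\ONC(t)),\gcom(\EPI[t]))\), and sample both \(\EPI\) and \(\OBS\) at the discrete instants \([t] = \floors{t}\). Applying \(\disc_\EPI\) and \(\disc_\OBS\) componentwise over robots, I would then define
\[
  \cfg[t] = \big\langle \disc_\EPI(\EPI[t]),\ \disc_\OBS(\OBS[t]),\ [t] \big\rangle,
  \qquad \disc^{*}(\run) = \{\cfg[t]\}_{[t]\in[0,T]},
\]
inheriting the participating set \(P_{\tpath}(t)\) verbatim from \(\tpath\). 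Since the dynamical and computational perspectives share the same scheduler, the activation data requires no translation, so \(\varnothing \neq P_{\tpath}(t)\) holds automatically.

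It remains to check that \(\{\cfg[t]\}\) satisfies the transition relation of \Cref{def:run}, which I would do by splitting on participation. If \(\robot \notin P_{\tpath}(t)\), its local discrete clock does not advance under \(\tpath\), so its hybrid epistemic state is stationary across the step, \(\EPI_\robot[t+1] = \EPI_\robot[t]\), and hence \(\cfg_\robot[t+1] = \cfg_\robot[t]\), matching the copy branch. If \(\robot \in P_{\tpath}(t)\), the hybrid successor is \(\EPI_\robot[t+1] = \FEPI(\EPI[t],\OBS[t])_\robot\) by~\eqref{eq:gbl-f-epistemic}, so the required identity reduces to
\[
  \disc_\EPI\big(\FEPI(\EPI[t],\OBS[t])\big) = \fepi\big(\disc_\EPI(\EPI[t]),\disc_\OBS(\OBS[t])\big),
\]
that is, to commutativity of the abstraction square with the epistemic evolution.

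The main obstacle is exactly this commutativity. Each abstraction map is a partition, i.e.\ a quotient by the equivalence relation that identifies hybrid states sharing an image, so the displayed identity holds precisely when that equivalence is a \emph{congruence} for the hybrid dynamics: global epistemic states and observation profiles lying in common fibers of \(\disc_\EPI\) and \(\disc_\OBS\) must have \(\FEPI\)-successors lying in a common fiber of \(\disc_\EPI\). I would take this congruence to be the content of the information-preservation hypothesis carried by a faithful abstraction; under it the computational evolution \(\fepi\) is well defined as the quotient of \(\FEPI\) along the partitions and the square commutes by construction, so \(\disc^{*}(\run)\) genuinely lands in \(\runs_{\syslcm}\). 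The same congruence also makes the value of \(\disc^{*}\) independent of the hybrid representative chosen within a fiber, while surjectivity of the abstraction maps ensures that \(\fepi\) is totally defined on the discrete state space, so that no abstracted configuration is spurious.
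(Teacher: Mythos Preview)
Your construction is exactly the paper's: apply the abstraction componentwise at each discrete time step \([t]\). The paper's proof is a single sentence stating precisely this and nothing more; it does not verify that the resulting sequence satisfies the transition relation of \Cref{def:run}, and it does not mention the congruence issue you raise.

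The extra work you do to check that \(\disc^{*}(\run)\) actually lands in \(\runs_{\syslcm}\) is more careful than what the paper provides, and the commutativity square you isolate is a genuine technical point the paper simply glosses over. That said, your resolution imports a hypothesis not present in the lemma: the statement speaks only of a system abstraction in the sense of \Cref{def:abstraction}, which is merely a tuple of surjections with no compatibility requirement on the dynamics, whereas ``faithful'' in the paper refers to \Cref{ass:realizability} (an inclusion of run sets, not a congruence). So your congruence assumption is neither a stated hypothesis of the lemma nor the paper's notion of faithfulness. The paper appears to treat the lemma as a definitional construction rather than something requiring a well-definedness argument; your added rigor is welcome, but be aware that the ingredient you lean on to close the square is not supplied anywhere in the text.
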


\begin{proof}
  The function \(\disc^{*}\) is constructed by applying the system abstraction to every trajectory \((\ONC(t),\EPI[t]) \in \runs_{\sysdyn}\) at every discrete time step \([t] \in [0,T]\).
\end{proof}

The induced system abstraction need not be surjective, as some computational runs may not be physically possible. For instance, the computational model of a robot may allow for a state transition where its position updates faster than its speed limit. We filter those cases as follows.

\begin{assumption}[Physical realizability]\label{ass:realizability}
  Let \(\sysdyn\) be a hybrid system with protocol defined by its epistemic evolution \(\FEPI\) and control function \(CTL\). Let \(\syslcm\) be its computational abstraction. We assume that every run in \(\runs_{\syslcm}\) has a physical counterpart, i.e., \(\runs_{\syslcm} \subseteq \disc^{*}(\runs_{\sysdyn})\).
\end{assumption}

\begin{theorem}[Computational trace equivalence ]\label{thm:correspondence}
  Let \(\runs_{\sysdyn}\) be the set of all runs of a computational system \(\sysdyn\), and \(\runs_{\syslcm}\) the set of all runs of a hybrid system \(\syslcm\). If \(\syslcm\) is a faithful abstraction of a hybrid system \(\sysdyn\), then \(\disc^{*}(\runs_{\sysdyn}) = \runs_{\syslcm}\).
\end{theorem}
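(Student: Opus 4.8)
The plan is to prove the set equality $\disc^{*}(\runs_{\sysdyn}) = \runs_{\syslcm}$ by double inclusion, assembling it from the two results that immediately precede the statement: the forward inclusion is supplied by \Cref{lem:induced-run} and the reverse by \Cref{ass:realizability}. The only genuine work is to make explicit that the hypothesis of faithfulness is exactly what licenses the second of these, so that the two halves can be combined.

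For the forward inclusion $\disc^{*}(\runs_{\sysdyn}) \subseteq \runs_{\syslcm}$, I would invoke \Cref{lem:induced-run} directly: that lemma establishes $\disc^{*}$ as a well-defined map $\runs_{\sysdyn} \to \runs_{\syslcm}$, obtained by applying the component abstractions $(\disc_\EPI, \disc_\OBS, \disc_\CTL, \disc_\ONC)$ pointwise at each discrete time step $[t]$ of a dynamical run. Hence the image of $\disc^{*}$ lies in $\runs_{\syslcm}$ by construction, and the only thing to verify beyond the lemma is that the abstracted sequence satisfies the transition clause of \Cref{def:run} at every step. This holds because the epistemic update $\FEPI$ descends through $\disc_\EPI$ to the induced $\fepi_\robot$, so the coarsened configurations evolve exactly according to \Cref{def:run}.

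For the reverse inclusion $\runs_{\syslcm} \subseteq \disc^{*}(\runs_{\sysdyn})$, I would read off the statement essentially verbatim from \Cref{ass:realizability}, which asserts precisely this containment. The conceptual pivot is the identification of the theorem's hypothesis ``$\syslcm$ is a faithful abstraction of $\sysdyn$'' with the physical-realizability guarantee of \Cref{ass:realizability}: faithfulness is what rules out spurious computational runs with no dynamical preimage, so that each element of $\runs_{\syslcm}$ equals $\disc^{*}(\run)$ for some hybrid run $\run \in \runs_{\sysdyn}$. Combining the two inclusions then yields the equality.

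I expect the main obstacle to be definitional rather than technical. Almost all of the substance is absorbed into what ``faithful abstraction'' is taken to mean; once this is fixed as the conjunction of well-definedness into $\runs_{\syslcm}$ (\Cref{lem:induced-run}) and surjectivity onto $\runs_{\syslcm}$ (\Cref{ass:realizability}), the proof is a one-line assembly. I would therefore use the proof chiefly to clarify this correspondence, stressing in particular that physical realizability is a modeling assumption on the control law and epistemic evolution and not a property derivable from the abstraction functions alone.
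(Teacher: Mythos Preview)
Your proposal is correct and follows essentially the same approach as the paper: a double-inclusion argument where the forward inclusion is read off from \Cref{lem:induced-run} (codomain of $\disc^{*}$ is $\runs_{\syslcm}$) and the reverse inclusion is exactly the content of \Cref{ass:realizability}. Your additional remarks on verifying the transition clause and on the definitional role of ``faithful abstraction'' are sound elaborations, but the paper's proof is even terser and simply cites these two results for the respective inclusions.
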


\begin{proof}
  We proceed by showing the mutual inclusion \(\disc^{*}(\runs_{\sysdyn}) \subseteq \runs_{\syslcm}\) and \(\runs_{\syslcm} \subseteq \disc^{*}(\runs_{\sysdyn})\).

  \textbf{\(\disc^{*}(\runs_{\sysdyn}) \subseteq \runs_{\syslcm}\)}. This follows from \Cref{lem:induced-run}, as the codomain of the induced system abstraction \(\disc^{*}\) is \(\runs_{\syslcm}\).

  \textbf{\(\runs_{\syslcm} \subseteq \disc^{*}(\runs_{\sysdyn})\)}. This holds directly from \Cref{ass:realizability}.
\end{proof}

This equivalence relies on \Cref{ass:realizability}, as we expect for the computation abstraction to be faithful to the physical dynamics. \Cref{thm:correspondence} formalizes the consequence of this assumption. 

% end Correspondence of dynamical and computational perspectives }}}

% end Schedulers and System Runs }}}

\section{Robot Tasks and Temporal Epistemic Logic} % {{{
\label{sec:task-tel}

% Introduction {{{
The remaining of this paper focuses on the computations of the distributed multi-robot systems described in \Cref{sec:robot-executions}, i.e., on their epistemic states and on their transitions. The epistemic evolution of robotic systems is described using epistemic frames that are build from the system runs of \Cref{def:run}:

\begin{definition}[Epistemic Frame] \label{def:epi-frame}
  Given a set of robots $\Pi$ and the set of all possible runs \(\runs\), an epistemic frame $\mathcal{F} = \langle \{\cfg\}_{\cfg \in \runs}, \{\sim_\robot\}_{\robot \in \Pi} \rangle$ is composed of a set of global robot configurations $\{\cfg\}_{\cfg \in \runs}$ (\Cref{def:global_config}) and a collection of indistinguishability relations $\{\sim_{\robot}\}_{\robot \in \Pi}$, where:
  \begin{itemize}
    \item Each global robot configuration, or \remph{point}, is constituted by a pair $(\run,[t])$ of system run $\run$ (\Cref{def:run}) and the corresponding global timestamp $[t]$ (\Cref{def:time}), which is relative to the run \(\run \in \runs\);
    \item the collection of indistinguishability relations $\sim_\robot$ indexed by robot $\robot \in \Pi$ are computed based on the equivalence of the local state: $(\run, [t]) \sim_\robot (\run', [t'])$ iff $e_{\robot}^{\cfg} = e_{\robot}^{\cfg'}$ where $\run[t]=\cfg$, i.e., if the epistemic state $e_\robot$ of $\robot$ in two different global robot configurations is the same.
  \end{itemize}
\end{definition}

The structure of the epistemic frame, and in particular its indistinguishability relations, depends on the assumptions of the robot model. For instance, if one considers that robots may not complete their \move\ steps (non-rigid model), uncertainty regarding their current locations will be added; visibility influences how much information can be gained (and discerned) during a \look\ step; asynchronicity allows indistinguisbility to range among points with the exact same global state but different global time, and so on. 

The epistemic frames, generated from the system runs of \Cref{def:run}, are used to reason about the knowledge of the multi-robot system throughout the robot tasks. Fittingly, we consider only the epistemic states of the system, as the ontic (physical) states are abstracted away in the transitions between computation steps.

In robot tasks, we assume that robots are able to sense a portion of the environment in which they are embedded, represented as a \remph{topological space}. We call the ``observable'' portion of the environment the \remph{observable space}, which we assume to be a \remph{compact} topological space, denoted by $(X, \tau)$. In particular, it is a compact normalized Euclidean space, i.e. $X \subseteq [0,1]^n$ with the standard Euclidean metric. To every open set $U$ of $\tau$ we associate a unique proposition $\xp(U)$. $\xp(X)$ is associated to the whole observation space $X$ and $\xp(\varnothing)$ to the empty space $\varnothing$. We refer to $\propspace$ as the set of \remph{space statements}. Space statements satisfy some important properties: $V \supseteq U$ implies that $\xp(V) \rightarrow \xp(U)$. Thus, $(\propspace, \rightarrow)$ is a poset, induced by the containment of open sets. Furthermore, since $(X,\tau)$ is closed under union $(\propspace, \rightarrow)$ is a join-lattice, allowing us to express, for example, information in a compact way: robot \remph{a} need only communicate to another robot \remph{b}, the largest region that contains all others, i.e., a single atom.

We say that a robot $\robot$ knows a formula $\psi$ at $(\run,[t])$ iff such a formula holds at all points that are indistinguishable from $(\run,[t])$ to $\robot$. In particular, because indistinguishability between points is defined on equivalence of robots' local states, the $\textbf{S5}$ epistemic logic\footnote{In an \textbf{S5} modal logic, the knowledge operator satisfies  factivity ($K_a \psi \rightarrow \psi$), positive introspection ($K_a \psi \rightarrow K_a K_a \psi$) and negative introspection ($\neg K_a \psi \rightarrow K_a \neg K_a \psi$).} is the natural candidate. This notion of knowledge can be extended to groups, such as \remph{distributed knowledge} $(\mathcal{D}_A \varphi)$  via the equivalence relation $\sim_{\mathcal{D}_A} := \bigcap_{\robot \in A} \sim_\robot$ consisting of the intersection of indistinguishabilities and representing the ``collective knowledge'' spread across all robots in $A$.

An epistemic model $\mathcal{M}$ is a tuple of an epistemic frame $\mathcal{F}$ and an evaluation function $\pi: \propspace \rightarrow 2 ^{\{\cfg\}_{\cfg \in \runs}}$ that assigns a truth value to atoms at each point of the model, where $\cfg$ is a global robot configuration, and $\propspace$ is the collection of space statements
The function $\pi$ simply maps a proposition $\xp(U)$ corresponding to a region $U$, to all the global configurations where $U$ holds.
In line with the runs and systems framework, we call our epistemic model \remph{interpreted system}, denoted by $\mathcal{I}$.

Note that, even though our logic is \emph{factive}, known facts about the real world are not necessarily true in the real world. As knowledge of robots depends on the robots \emph{observation}, factivity can only hold with respect to that observation. Consider as an example the formula $(\run, [t]) \models K_\robot position(\robot,x,y)$ denoting that robot $\robot$'s position at the time $t$ is at the point $(x,y)$ and lets assume that this was actually true in the actual world at the time robot observed its surrounding. This does not imply that it is \emph{still} true at any time $t' \neq t$ where $[t'] = [t]$. In fact our knowledge operator is only \textbf{S5} with respect to the epistemic observations, but not with respect to the \emph{real world}. Common knowledge experiences similar frictions, where attainability depends on the model and timing assumptions \cite{Fagin2003,HM90}, yet our modeling unveils the crucial dependency on \emph{observations}.
This is in line with the fact that we consider only the epistemic states of the system, abstracting away the ontic (physical) states in the transitions between computation steps.
% Introduction }}}

\subsection{A Temporal Epistemic Logic of Space} % {{{
\label{sec:logexplore}
The following grammar defines temporal epistemic logic formulas:

\begin{definition}[Language $\mathcal{L}_{space}$] \label{def:lang-exp}
    The language is defined by the following grammar:

    $$ \varphi ::= \xp(U) \: \vert \: \neg \varphi \: \vert \: (\varphi \wedge \varphi) \: \vert \: K_\robot \varphi \: \vert \: D_A \varphi \: \vert \: \Diamond \varphi$$

where $U \in \propspace$, $\robot \in \Pi$, and $A \subseteq \Pi$.  
\end{definition}

In this grammar, $\xp(U)$ is a space statement, $K_\robot \varphi$ represents ``$\robot$ knows $\varphi$'', $D_A \varphi$ stands for `` group \remph{A} has distributed knowledge of $\varphi$'', and $\Diamond \varphi$ reads ``eventually $\varphi$''.

\begin{definition}[Semantics of $\mathcal{L}_{space}$] \label{def:sem}
Let $\mathcal{I}$ be a set of runs (\Cref{def:epi-frame}), with its respective set of configuration-time pairs $\mathcal{W}$, an evaluation function $\pi: \propspace \rightarrow 2^{\mathcal{W}}$, a run $\run \in \mathcal{I}$, and a robot $\robot \in \Pi$, :

\begin{itemize}
    \item $\mathcal{I},(\run,[t]) \models \xp(U) \text{ iff } \xp(U) \in \pi(\run,[t])(\xp(U))$
    \item $\mathcal{I},(\run,[t]) \models \neg \varphi \text{ iff } \mathcal{I},(\run,[t]) \not\models \varphi $
    \item $\mathcal{I},(\run,[t]) \models (\varphi \wedge \psi) \text{ iff } \mathcal{I},(\run,[t]) \models \varphi \text{ and } \mathcal{I},(\run,[t]) \models \psi $
    \item $\mathcal{I},(\run,[t]) \models K_{\robot}\varphi \text{ iff } \mathcal{I}, (\run',[t]') \models \varphi \text{ for all } (\run',[t]')\in \mathcal{W} \text{ s.t. } 
        (\run,[t]) \underset{\robot}{\sim} (\run',[t]')$
    \item $\mathcal{I},(\run,[t]) \models D_A\varphi \text{ iff } \mathcal{I}, (\run',[t]') \models \varphi \text{ for all } (\run',[t]')\in \mathcal{W} \text{ s.t. } 
        (\run,[t]) \underset{{\mathcal{D}_A}}{\sim} (\run',[t]')$
    \item $\mathcal{I},(\run,[t]) \models \Diamond \varphi$ iff $\mathcal{I},(\run,[t]') \models \varphi $ for some $[t]' \geq [t]$.
\end{itemize}

We further set $\Box\varphi := \neg\Diamond\neg\varphi$ and $\varphi\vee\psi := \neg(\neg\varphi\wedge\neg\psi)$.
We also define mutual knowledge $E_\Pi \varphi$ as the conjunction of the knowledge of robots in $\Pi$: $E_\Pi \varphi := \bigwedge_{\robot \in \Pi} K_\robot \varphi$.

A formula $\varphi$ is valid in $\mathcal{I}$ if, for all $(\run,[t]) \in\mathcal{W}$,  $\mathcal{I} ,(\run,[t]) \models \varphi$. If $\varphi$ is valid in $\mathcal{I}$, we use the notation $\mathcal{I} \models \varphi$.
\end{definition}
% end Logic }}}

\subsection{Application to Exploration} % {{{
\label{sec:LCMexplore}

In this section we analyze the robot exploration task for a LCM model using the framework introduced in the previous section. Spatial atoms now denote that a certain region is \emph{explored}, i.e., some robot has observed it, without changing the semantics.

\begin{definition}[Simple robot exploration] \label{def:epi_cond}
    We say that a system is consistent with a simple exploration system if, for any $\robot \in \Pi$, and any region $U$, it satisfies:

  \begin{itemize}
    \item \textbf{Exploration agency:}  $\mathcal{I} \models \xp(U) \rightarrow  D_{\Pi} \xp(U)$, i.e., if a region $U$ has been explored, it has been explored by at least one robot \\
    \item \textbf{Exploration independence:}  $\mathcal{I} \models D_{A} \xp(U) \rightarrow \bigwedge_{\robot \in A}K_\robot \xp({V_\robot}) \textrm{ s. t. } \bigcup_{\robot \in A} V_\robot = U$, i.e., any space atom known distributively by robots in $A$, is given by the union of the space atoms known by robots in $A$ \\
    \item \textbf{Stable space statements:} $\mathcal{I} \models
                \bigwedge_{\xp(U) \in \mathcal{L}_{exp}^-} \xp(U) \rightarrow \Box \xp(U)$, i.e., once a region has been explored, it is forever explored.\\
    \item \textbf{Stable exploration knowledge (perfect recall):}  $\mathcal{I} \models K_{\robot} \xp(U) \rightarrow \Box K_\robot \xp(U)$, i.e., once a space atom is known by a robot, it will be known forever by that robot.
  \end{itemize}
\end{definition}

In the exploration task, if a robot knows that a certain region of space $\xp(U)$ has been explored, denoted by $K_\robot \xp(U)$, it implies that all subsets of that region are also known by that robot, i.e., $K_\robot \xp(V)$ with $V \subseteq U$ and $V$ open. This follows directly from the compactness of the exploration space.

\begin{definition}[Termination conditions]
\label{def:parterm}
  Given an interpreted system $\mathcal{I}$ consistent with simple robot exploration (\Cref{def:epi_cond}), with a set of robots $\Pi$, exploration space $X$, and its respective space statements $\propspace$, we say that $\mathcal{I}$ is consistent with:
  \begin{itemize}
    \item \textbf{Parallel termination}: iff $\mathcal{I} \models \Diamond D_\Pi \xp(X)$, i.e., no single robot needs to know that the whole space is explored;
    \item \textbf{Selfish termination}: iff $\mathcal{I} \models \Diamond E_\Pi \xp(X)$, i.e., each robot needs to know that the whole space is explored;
    \item \textbf{Cooperative termination}: iff $\mathcal{I} \models \Diamond E_\Pi \Diamond E_\Pi \xp(X)$, i.e., each robot needs to know that the whole space is explored and that other robots know that the whole space is explored.
  \end{itemize}
\end{definition}

Parallel termination can be characterized in terms of a very weak condition, called exploration liveness:

\begin{definition}[Exploration liveness]
\label{def:expllive}
    A run $\run$ satisfies exploration liveness iff there exists a collection of open sets, $\mathcal{U}$, such that $\bigcup_{U \in \mathcal{U}} U  = X$, and for any $U \in \mathcal{U}$, $\mathcal{I},\run \models \Diamond \xp(U)$.
    A model satisfies exploration liveness iff all of its runs satisfy exploration liveness.
\end{definition}

In other words, exploration liveness says that any point $x \in X$ of the exploration space $X$ will eventually be explored.

\begin{theorem}[Exploration Liveness Necessity]\label{thm:explivenec}
    Exploration liveness is necessary for simple exploration.
\end{theorem}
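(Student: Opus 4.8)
The plan is to read \textbf{simple exploration} as the requirement that the whole observable space is eventually explored, formalised by parallel termination $\mathcal{I} \models \Diamond D_\Pi \xp(X)$ (\Cref{def:parterm}) on top of the axioms of \Cref{def:epi_cond}. This is the weakest of the three termination notions: since $E_\Pi\varphi \rightarrow D_\Pi\varphi$ we get $\Diamond E_\Pi\xp(X) \rightarrow \Diamond D_\Pi\xp(X)$, and by factivity cooperative termination collapses likewise, so establishing necessity against parallel termination establishes it for every exploration task. Because exploration liveness (\Cref{def:expllive}) is a per-run property, I would fix an arbitrary run $\run \in \mathcal{I}$ and exhibit, for that run, an open cover of $X$ each of whose members is eventually explored.

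First I would instantiate validity of parallel termination at the initial point, $\mathcal{I},(\run,[0]) \models \Diamond D_\Pi \xp(X)$, which by \Cref{def:sem} yields a global time $[t] \ge [0]$ with $\mathcal{I},(\run,[t]) \models D_\Pi \xp(X)$. Next I would apply exploration independence from \Cref{def:epi_cond} with $A = \Pi$ and $U = X$, obtaining space statements $\xp(V_\robot)$ with $\bigcup_{\robot \in \Pi} V_\robot = X$ and $\mathcal{I},(\run,[t]) \models K_\robot \xp(V_\robot)$ for every $\robot$. Since each relation $\sim_\robot$ is reflexive, the knowledge operator is factive at the evaluation point, so $\mathcal{I},(\run,[t]) \models \xp(V_\robot)$, and as $[t]\ge[0]$ this gives $\mathcal{I},(\run,[0]) \models \Diamond \xp(V_\robot)$. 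Taking $\mathcal{U} = \{V_\robot : \robot \in \Pi\}$ then yields an open cover of $X$ with $\mathcal{I},\run \models \Diamond \xp(U)$ for each $U \in \mathcal{U}$, so $\run$ satisfies exploration liveness; as $\run$ was arbitrary, the model does too. A one-line shortcut is available, namely applying factivity directly to $D_\Pi\xp(X)$ to obtain $\Diamond\xp(X)$ and using the singleton cover $\mathcal{U}=\{X\}$, but I would keep the decomposition since it exposes how the individual robots' explored regions jointly cover the space.

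The step that deserves the most care is the factive inference, because the paper stresses that knowledge here is \textbf{S5} only with respect to observations and that known facts about the world need not hold in the world. I must therefore argue that factivity is legitimate for space statements specifically: $\xp(V_\robot)$ asserts that a region has been explored, a property that is stable in time (the stable space statements and perfect recall clauses of \Cref{def:epi_cond}) and is read off at the very point witnessing $K_\robot \xp(V_\robot)$, so the inference reduces to the reflexive instance $(\run,[t]) \sim_\robot (\run,[t])$ of \Cref{def:sem}. The remaining checks are routine: each $V_\robot$ is open because space statements are indexed by open sets of $\tau$, the union equals $X$ by exploration independence, and the placement of $\Diamond$ relative to $[0]$ is exactly as in \Cref{def:sem}.
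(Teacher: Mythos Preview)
Your proof is correct, but it takes a different route from the paper's. The paper argues by \emph{contraposition}: it fixes a run~$\run$ for which exploration liveness fails, forms the collection $\mathcal{U}'=\{U\mid \mathcal{I},\run\models\Diamond\xp(U)\}$ of all eventually-explored regions, observes that this cannot cover~$X$, picks a point $x\notin\bigcup\mathcal{U}'$, and concludes that $X$ is never explored in~$\run$. It reads ``simple exploration'' as $\Diamond\xp(X)$ rather than parallel termination, and it never invokes exploration independence or factivity of~$D_\Pi$.

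Your direct argument works just as well, but it uses more of the machinery of \Cref{def:epi_cond} than is strictly needed: the decomposition via exploration independence into per-robot regions $V_\robot$ is pleasant but optional, as your own ``one-line shortcut'' already shows. That shortcut (factivity of $D_\Pi\xp(X)$ giving $\Diamond\xp(X)$, then taking the singleton cover $\{X\}$) is essentially the direct form of the paper's contrapositive. What the paper's point-set formulation buys is independence from the axioms of \Cref{def:epi_cond}: the argument only needs that $\xp(X)$ entails $\xp(U)$ for every open $U\subseteq X$, so the necessity of liveness holds even before one commits to exploration agency or independence. What your version buys is an explicit witness cover tied to the robots' individual contributions, which is informative but not required for the theorem as stated.
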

\begin{proof}
    \label{prf:explivenec}
    Note by contraposition that if there is a run $\run$, such that no open cover $\mathcal{U}$ of eventually explored regions exists. Naturally, we may consider $\mathcal{U}' = \{ U \; \vert \; \mathcal{I}, \run \models \Diamond \xp(U)\}$. Since $\mathcal{U}'$ is not an open cover of $X$, then $\bigcup_{U \in \mathcal{U}'} U \neq X$. Therefore, there is a point $x \in X, x \notin \bigcup_{U \in \mathcal{U}'} U$. Thus, by definition of $\mathcal{U}'$, $x$ is never explored, and consequently $X$ is not explored in $\run$.
\end{proof}

\begin{theorem}[Exploration Liveness Sufficiency]\label{thm:explivesuff}
    Let $\mathcal{I}$ be a model with a compact exploration space $X$. If $\mathcal{I}$ satisfies exploration liveness, then $\mathcal{I} \models \Diamond \xp(X)$.
\end{theorem}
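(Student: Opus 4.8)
The plan is to prove the validity $\mathcal{I} \models \Diamond \xp(X)$ pointwise: I fix an arbitrary configuration--time pair $(\run,[t]) \in \mathcal{W}$ and exhibit a later time at which the whole space has been explored. Unfolding the semantics of $\Diamond$ (\Cref{def:sem}), it suffices to produce a single $[t^\ast] \geq [t]$ with $\mathcal{I},(\run,[t^\ast]) \models \xp(X)$. The engine of the argument is the interplay between exploration liveness, compactness of $X$, and the stability of explored regions.

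First I would invoke exploration liveness for the run $\run$ (\Cref{def:expllive}): there is a family $\mathcal{U}$ of open sets with $\bigcup_{U \in \mathcal{U}} U = X$ such that every $U \in \mathcal{U}$ is eventually explored, i.e. $\mathcal{I},\run \models \Diamond \xp(U)$. Since $X$ is compact and $\mathcal{U}$ is an open cover, I extract a finite subcover $U_1,\dots,U_k$ with $U_1 \cup \dots \cup U_k = X$. For each $i$, the corresponding $\Diamond \xp(U_i)$ yields a time $[t_i]$ with $\mathcal{I},(\run,[t_i]) \models \xp(U_i)$.

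The remaining difficulty is that these $k$ witnesses occur at possibly different times, so I must synchronise them. Here I use the \textbf{Stable space statements} axiom of \Cref{def:epi_cond}, $\mathcal{I} \models \xp(U_i) \rightarrow \Box \xp(U_i)$: once $U_i$ is explored it stays explored, hence $\mathcal{I},(\run,[t']) \models \xp(U_i)$ for all $[t'] \geq [t_i]$. Setting $[t^\ast] := \max\{[t],[t_1],\dots,[t_k]\}$, all of $U_1,\dots,U_k$ are simultaneously explored at $[t^\ast]$. Finally, because the evaluation $\pi$ interprets $\xp(U)$ as ``$U$ is explored'' and the explored region is closed under finite unions (the join in the join-lattice $(\propspace,\rightarrow)$), having explored every piece of the finite cover means that $X = \bigcup_i U_i$ is explored, so $\mathcal{I},(\run,[t^\ast]) \models \xp(X)$. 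As $[t^\ast] \geq [t]$, this gives $\mathcal{I},(\run,[t]) \models \Diamond \xp(X)$, and since $(\run,[t])$ was arbitrary, $\mathcal{I} \models \Diamond \xp(X)$.

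I expect the main obstacle to be precisely this synchronisation step, and it is the reason compactness is indispensable: liveness only guarantees, for each individual open set, that it is explored at \emph{some} time, and there is in general no common time at which infinitely many such events have all occurred. Compactness reduces the cover to finitely many sets, after which the stability axiom lets me take a maximum of finitely many exploration times; both ingredients are essential, and dropping either breaks the passage from ``each region is eventually explored'' to ``the whole space is eventually explored''. A minor point to state carefully is the semantic fact $\bigwedge_i \xp(U_i) \rightarrow \xp(\bigcup_i U_i)$, which I justify from the exploration reading of $\pi$ together with the closure of $\propspace$ under joins.
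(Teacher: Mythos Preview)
Your proposal is correct and follows essentially the same argument as the paper: invoke exploration liveness to get an open cover of eventually-explored sets, extract a finite subcover by compactness, use stability of space statements to pass to the maximum of the finitely many exploration times, and conclude that $\xp(X)$ holds there. If anything, you are slightly more explicit than the paper in two places: you include the starting time $[t]$ in the maximum to make $[t^\ast]\geq [t]$ evident, and you spell out the step $\bigwedge_i \xp(U_i)\rightarrow \xp(\bigcup_i U_i)$, which the paper leaves implicit.
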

\begin{proof}
\label{prf:explivesuff}
    Consider an arbitrary run $\run$ of $\mathcal{I}$, from \Cref{def:expllive}, there is an open cover $\mathcal{U}$ such that for any $U \in \mathcal{U}$, $\mathcal{I}, \run \models \Diamond \xp(U)$. Since $X$ is compact and $\mathcal{U}$ is an open cover of $X$ then there exists a finite subcover $\mathcal{U}' = \{ V_1, \ldots, V_k\} \subseteq \mathcal{U}$, such that $\bigcup_{i=1}^k V_i = X$. Also, since $\mathcal{I}, \run \models \Diamond \xp(V_i)$, then for any $i \in [1, k]$, then for each $V_i$ there is a time $[t_i]$ such that $\mathcal{I}, \run, [t_i] \models \xp(V_i)$. Simply consider $[t_{\textrm{max}}] = \mbox{}_{\textrm{max}} \{ [t_i]\}$. Since space statements are stable facts, then $\mathcal{I}, \run, [t_{\textrm{max}}] \models \bigwedge_{i= 1}^{k} \xp(V_i)$. Therefore $\mathcal{I}, \run, [t_{\textrm{max}}] \models \xp(X)$. Thus $\mathcal{I}, \run \models \Diamond \xp(X)$. Since $\run$ is an arbitrary run of $\mathcal{I}$, then it follows that $\mathcal{I} \models \Diamond \xp(X)$.
\end{proof}

\Cref{thm:explivenec}, \Cref{thm:explivesuff}, and \remph{exploration agency} from \Cref{def:epi_cond} imply that whenever $X$ is compact exploration liveness is necessary and sufficient for \remph{parallel termination}\footnote{Note that this is not necessarily true in non-compact spaces.}.

\begin{definition}[Exploration flooding]
    We say that the communication of a robot system is exploration flooding iff at each phase of the \LCM\ cycle, each robot tries to communicate the largest region of space it knows to be explored to the rest of the robots.
\end{definition}

\begin{lemma}\label{lem:eventualcomm}
Consider a model $\mathcal{I}$ with an exploration space $X$, from a $\LUMI$ robot system with full visibility and exploration flooding, and $\xp(U)$ an space statement, an arbitrary run $\run$ from $\mathcal{I}$, and an arbitrary robot $\robot \in \Pi$. If $\mathcal{I}, \run \models \Diamond K_\robot \xp(U)$, then $\mathcal{I}, \run \models \Diamond E_{\Pi} \xp(U)$.
\end{lemma}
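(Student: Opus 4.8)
The plan is to propagate the knowledge held by the single robot $\robot$ to every other robot by exploiting exploration flooding together with full visibility, and then to synchronise the resulting per-robot knowledge to a common time using perfect recall. By hypothesis $\mathcal{I},\run \models \Diamond K_\robot \xp(U)$, so there is a global time $[t_0]$ with $\mathcal{I},\run,[t_0] \models K_\robot \xp(U)$. The stable exploration knowledge (perfect recall) clause of \Cref{def:epi_cond} guarantees $\mathcal{I},\run,[t] \models K_\robot \xp(U)$ for every $[t] \ge [t_0]$, so from $[t_0]$ onward $\robot$ permanently knows that $U$ is explored.

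Next I would argue that each other robot $\robot'$ eventually learns $\xp(U)$. Since $\robot$ knows $\xp(U)$ from $[t_0]$ on, exploration flooding forces $\robot$ to expose, through its lights, the largest region it knows to be explored, and that region contains $U$. With full visibility, the first time $\robot'$ performs a $\look$ after $\robot$ has broadcast this light, $\robot'$ observes a region containing $U$; its subsequent $\compute$ step then updates the epistemic state of $\robot'$ so that $\mathcal{I},\run,[t_{\robot'}] \models K_{\robot'} \xp(U)$ for some time $[t_{\robot'}]$. Here I would invoke the downward closure of space statements under containment, $V \supseteq U \Rightarrow \xp(V)\rightarrow\xp(U)$, together with the compactness remark following \Cref{def:epi_cond}, to pass from knowledge of the flooded (larger) region to knowledge of $U$ itself. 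This yields $\mathcal{I},\run \models \Diamond K_{\robot'}\xp(U)$ for every $\robot' \in \Pi$.

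Finally I would align these times. By perfect recall each $\robot'$ keeps knowing $\xp(U)$ once it does, so $\mathcal{I},\run,[t] \models K_{\robot'}\xp(U)$ for all $[t] \ge [t_{\robot'}]$. Because $\Pi$ is finite I may set $[t_{\max}] = \max_{\robot' \in \Pi}[t_{\robot'}]$ and conclude that at $[t_{\max}]$ every robot knows $\xp(U)$, i.e. $\mathcal{I},\run,[t_{\max}] \models E_\Pi \xp(U)$, hence $\mathcal{I},\run \models \Diamond E_\Pi \xp(U)$. This maximum-of-times step is exactly the synchronisation device already used in the proof of \Cref{thm:explivesuff}.

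The hard part will be justifying that every $\robot'$ does perform a $\look$ after $[t_0]$ that observes $\robot$'s broadcast: this needs a liveness/fairness guarantee on the scheduler (each robot is activated at least once more past $[t_0]$), which I would extract from the scheduler assumptions behind \Cref{def:run}. A secondary subtlety is bookkeeping the $\LUMI$ phase ordering — the light carrying $\robot$'s knowledge is written during a $\move$, read during the next $\look$, and turned into knowledge during the following $\compute$ — so that the time $[t_{\robot'}]$ is genuinely reached within the run rather than merely approached in the limit.
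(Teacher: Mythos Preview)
Your proposal is correct and follows essentially the same route as the paper: fix $[t_0]$ where $K_\robot\xp(U)$ holds, use perfect recall to keep it, let each $\robot'$ read $\robot$'s flooded light (which encodes a region $U'\supseteq U$) at some later time to obtain $K_{\robot'}\xp(U)$ via downward closure, then take the maximum over the finite set $\Pi$. The paper handles your ``hard part'' with the single clause ``since we don't consider crash failures, there is a time $[t]_{\robot'}\ge[t_0]$ when $\robot'$ is able to read the lights of $\robot$,'' so your instinct to isolate this as the needed scheduler liveness assumption is exactly right.
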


\begin{proof}
\label{prf:eventualcomm}
    Since $\mathcal{I}, \run \models \Diamond K_\robot \xp(U)$, then there exists a time $[t_0]$ such that $\mathcal{I}, \run, [t_0] \models K_\robot \xp(U)$. Since robots have perfect recall, and space statements are stable, $\mathcal{I}, \run, [t] \models K_\robot \xp(U)$ for any $[t] \geq [t_0]$.

    Let $\robot' \in \Pi$ be an arbitrary robot. Since we don't consider crash failures, there is a time $[t]_{\robot'} \geq [t_0]$ when $\robot'$ is able to read the lights of robot $\robot$. Let $U'$ be the maximal region that $\robot$ knows to be explored at time $[t]_{\robot'}$. Since the robot system is exploration flooding, then robot $\robot$ is communicating $\xp(U')$ through its lights. Therefore, robot $\robot'$ learns $\xp(U')$, i.e. $K_{\robot'} \xp(U')$. However, recall that $U \subseteq U'$. Thus $\mathcal{I}, \run, [t]_{\robot'} \models K_{\robot'} \xp(U)$. 

    Since $\Pi$ is finite, then there is a $[t_{\textrm{max}}]$ such that $\mathcal{I}, \run, [t_{\textrm{max}}] K_{\robot''} \xp(U)$ for any robot $\robot'' \in \Pi$. Finally, $\mathcal{I}, \run \models \Diamond E_{\Pi} \xp(U)$.
\end{proof}

\begin{theorem}\label{thm:termequiv}
    Consider an interpreted system $\mathcal{I}$ with a compact exploration space $X$ such that $\mathcal{I}$ satisfies the exploration liveness condition, and such that the robot system is exploration flooding. Then $\mathcal{I}$ satisfies the cooperative termination property from \Cref{def:parterm}, namely, $\mathcal{I} \models \Diamond E_{\Pi} \Diamond E_{\Pi} \xp(X)$
\end{theorem}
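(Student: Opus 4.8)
The plan is to first prove the single-level statement $\mathcal{I} \models \Diamond E_\Pi \xp(X)$ --- eventually every robot knows that the whole space has been explored --- and then bootstrap it to the nested form $\Diamond E_\Pi \Diamond E_\Pi \xp(X)$ by a standard ``validities are known'' argument. The outer implication is cheap; essentially all the work lies in the single-level statement, and within it in converting \emph{distributed} knowledge of $\xp(X)$ into \emph{individual} knowledge held by everyone.

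For the single-level statement, fix an arbitrary run $\run$. By \Cref{thm:explivesuff}, $\mathcal{I}, \run \models \Diamond \xp(X)$, so there is a time $[s]$ with $\mathcal{I}, \run, [s] \models \xp(X)$. At this point \emph{exploration agency} yields $D_\Pi \xp(X)$, and \emph{exploration independence} decomposes it as $\bigwedge_{\robot \in \Pi} K_\robot \xp(V_\robot)$ with $\bigcup_{\robot \in \Pi} V_\robot = X$. Since $\Pi$ is finite this is a \emph{finite} family of pieces, each individually known by some robot at $[s]$, so $\mathcal{I}, \run \models \Diamond K_\robot \xp(V_\robot)$ for every $\robot$. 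Applying \Cref{lem:eventualcomm} (whose full-visibility and $\LUMI$ flooding hypotheses we inherit for the present system) to each piece gives $\mathcal{I}, \run \models \Diamond E_\Pi \xp(V_\robot)$.

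It remains to merge the finitely many pieces at a common time. For each $\robot$ pick a time past which every robot knows $\xp(V_\robot)$; by \emph{perfect recall} and \emph{stable space statements} this knowledge persists, so taking the maximum $[t^\ast]$ over the finite family yields a single time at which every robot knows $\xp(V_\robot)$ for \emph{all} $\robot$ simultaneously. Because $K$ distributes over conjunction and explored regions are closed under union --- the same fact $\bigwedge_i \xp(V_i) \to \xp(\bigcup_i V_i)$ already invoked in the proof of \Cref{thm:explivesuff} --- each robot then knows $\xp(\bigcup_\robot V_\robot) = \xp(X)$ at $[t^\ast]$. Thus $\mathcal{I}, \run, [t^\ast] \models E_\Pi \xp(X)$, and as $\run$ was arbitrary, $\mathcal{I} \models \Diamond E_\Pi \xp(X)$.

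Finally, set $\psi := \Diamond E_\Pi \xp(X)$, now known to be valid. Since any formula true at all points is true at all indistinguishable points, every robot knows $\psi$ at every point, i.e.\ $\mathcal{I} \models E_\Pi \psi = E_\Pi \Diamond E_\Pi \xp(X)$; reflexivity of $\Diamond$ (take $[t]' = [t]$ in the semantics) then prepends the outer diamond, giving $\mathcal{I} \models \Diamond E_\Pi \Diamond E_\Pi \xp(X)$, as required. I expect the crux to be the middle step: at the instant $X$ becomes explored only \emph{distributed} knowledge of $\xp(X)$ is available, so \Cref{lem:eventualcomm} cannot be applied to $\xp(X)$ directly. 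One must pass through the individually-known pieces $V_\robot$, flood each, and then resynchronize and recombine them --- precisely where finiteness of $\Pi$, perfect recall, and union-closure of exploration are all used.
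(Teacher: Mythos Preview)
Your proof is correct, and the first half --- deriving $\Diamond E_\Pi \xp(X)$ by invoking \Cref{thm:explivesuff}, passing to $D_\Pi\xp(X)$ via exploration agency, decomposing into individually known pieces $V_\robot$ via exploration independence, flooding each piece with \Cref{lem:eventualcomm}, and resynchronizing at a common time using perfect recall --- is essentially identical to the paper's argument (you are in fact more explicit than the paper about the time-merging step). The divergence is in the bootstrap to the outer $E_\Pi$. The paper works pointwise: at the time $[t_1]$ where $E_\Pi\xp(X)$ holds, every point $(\run',[t'])$ that is $\robot'$-indistinguishable from $(\run,[t_1])$ also satisfies $K_{\robot'}\xp(X)$, and the paper then \emph{re-invokes} \Cref{lem:eventualcomm} inside $\run'$ to obtain $\Diamond E_\Pi\xp(X)$ there, concluding $K_{\robot'}\Diamond E_\Pi\xp(X)$ at $(\run,[t_1])$. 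Your necessitation shortcut --- once $\Diamond E_\Pi\xp(X)$ is \emph{valid} (holds at every point, which follows from persistence of $E_\Pi\xp(X)$ under perfect recall), it is automatically known by everyone everywhere --- collapses this to one line and avoids the second appeal to the communication lemma. Both routes are sound; yours is shorter and makes transparent that no further communication reasoning is needed for the outer layer, while the paper's version stays closer to the operational picture of what the robots actually observe.
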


\begin{proof}
\label{prf:termequiv}
    Since $X$ is compact and satisfies exploration liveness, then by \Cref{thm:explivenec}, then $\mathcal{I} \models \Diamond \xp(X)$. By exploration agency, then $\mathcal{I} \models \Diamond D_{\Pi} \xp(X)$. Let $\run$ be a fixed run of $\mathcal{I}$, then there exists a time $[t_0]$ such that $\mathcal{I}, (\run, [t_0]) \models D_{\Pi} \xp(X)$. Consider for each robot $\robot \in \Pi$, a region $U_\robot := \bigcup \{ U \subseteq X \; \vert \; \mathcal{I}, \run, [t_0] \models K_\robot \xp(U) \}$. Note that $K_\robot \xp(U_\robot)$, and $U_\robot$ is the maximal region that robot $\robot$ knows to be explored. From \remph{exploration independence}(~\Cref{def:epi_cond}), it follows that $\bigcup_{i=1}^k U_i = X$. 

    From \Cref{lem:eventualcomm}, it holds that $\mathcal{I}, \run \models \Diamond E_{\Pi} U_\robot$. Therefore, $\mathcal{I}, \run \models \Diamond E_{\Pi} \bigcup_{i=1}^n U_i$. This shows that $\mathcal{I}, \run \models \Diamond E_{\Pi} \xp(X)$.

    Let $[t_1]$ be such that $\mathcal{I}, \run, [t_1] \models E_{\Pi} \xp(X)$. In particular consider an arbitrary robot $\robot' \in \Pi$. Since $\mathcal{I}, \run, [t_1] \models K_\robot \xp(X)$, then by definition of knowledge, in any other run $\run'$ and any other time $[t']$ that is indistinguishable from $\run$ at time $[t_1]$ to $\robot'$, $\xp(X)$ holds. From \Cref{lem:eventualcomm}, it follows that $\mathcal{I}, \run' , [t'] \models \Diamond E_{\Pi} \xp(X)$. Thus, from the definition of the knowledge modality, $\mathcal{I}, \run, [t_1] \models K_{\robot'} \Diamond E_{\Pi} \xp(X)$. Since $\robot'$ is arbitrary, then $\mathcal{I}, \run, [t_1] \models E_{\Pi} \Diamond E_{\Pi} \xp(X)$. Therefore $\mathcal{I}, \run \models \Diamond E_{\Pi} \Diamond E_{\Pi} \xp(X)$. Therefore, $\mathcal{I} \models \Diamond E_{\Pi} \Diamond E_{\Pi} \xp(X)$.
\end{proof}

\begin{corollary} \label{cor:all-term-equiv}
    All termination conditions are equivalent for a $\LUMI$ robot system with exploration flooding, and exploration fairness. 
\end{corollary}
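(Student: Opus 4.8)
The plan is to establish the three notions of \Cref{def:parterm} as a cycle of implications, Parallel $\Rightarrow$ Cooperative $\Rightarrow$ Selfish $\Rightarrow$ Parallel, so that they collapse onto one another. Throughout I assume the system is consistent with simple robot exploration (\Cref{def:epi_cond}), so that exploration independence and stability of space statements are available. Two of the three arrows turn out to be purely logical and hold in any \textbf{S5} interpreted system with the temporal semantics of \Cref{def:sem}; only the remaining arrow uses the \LUMI, flooding and fairness hypotheses.

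First I would dispatch the two easy arrows. For Selfish $\Rightarrow$ Parallel I use that mutual knowledge entails distributed knowledge: since $\sim_{\mathcal{D}_\Pi} = \bigcap_{\robot \in \Pi} \sim_\robot \subseteq \sim_\robot$, every $\sim_{\mathcal{D}_\Pi}$-successor is a $\sim_\robot$-successor, hence $K_\robot \xp(X) \rightarrow D_\Pi \xp(X)$ and a fortiori $E_\Pi \xp(X) \rightarrow D_\Pi \xp(X)$; applying $\Diamond$ monotonically gives $\Diamond E_\Pi \xp(X) \rightarrow \Diamond D_\Pi \xp(X)$. For Cooperative $\Rightarrow$ Selfish I combine factivity of each $K_\robot$ (hence of $E_\Pi$) with idempotence of $\Diamond$: factivity gives $E_\Pi \Diamond E_\Pi \xp(X) \rightarrow \Diamond E_\Pi \xp(X)$, and since the temporal accessibility $[t'] \geq [t]$ is reflexive and transitive we have $\Diamond\Diamond \varphi \leftrightarrow \Diamond \varphi$, so $\Diamond E_\Pi \Diamond E_\Pi \xp(X) \rightarrow \Diamond\Diamond E_\Pi \xp(X) \leftrightarrow \Diamond E_\Pi \xp(X)$.

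The substantive arrow is Parallel $\Rightarrow$ Cooperative, and here I would re-run the second half of the proof of \Cref{thm:termequiv}, now taking $\Diamond D_\Pi \xp(X)$ as the hypothesis rather than deriving it from compactness and exploration liveness. Fix a run $\run$ and a time $[t_0]$ with $\mathcal{I},(\run,[t_0]) \models D_\Pi \xp(X)$. Setting $U_\robot := \bigcup\{U \;\vert\; \mathcal{I},\run,[t_0] \models K_\robot \xp(U)\}$, exploration independence yields $\bigcup_{\robot \in \Pi} U_\robot = X$. Exploration flooding and exploration fairness — the latter read as the assumption that, absent crashes, every robot eventually reads every other robot's lights — are exactly the ingredients of \Cref{lem:eventualcomm}, so each $\Diamond K_\robot \xp(U_\robot)$ lifts to $\Diamond E_\Pi \xp(U_\robot)$; taking the union over the finite set $\Pi$ and using stability of space statements gives $\mathcal{I},\run \models \Diamond E_\Pi \xp(X)$, i.e. Selfish termination along $\run$. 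The bootstrap to the nested form then proceeds verbatim as in \Cref{thm:termequiv}: at the witnessing time $[t_1]$ each robot knows $\xp(X)$, and since \Cref{lem:eventualcomm} holds at every indistinguishable point, each robot in fact knows $\Diamond E_\Pi \xp(X)$, yielding $E_\Pi \Diamond E_\Pi \xp(X)$ and hence $\Diamond E_\Pi \Diamond E_\Pi \xp(X)$.

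I expect the only real subtlety — the main obstacle — to be precisely this bootstrapping step that produces the nested modality $\Diamond E_\Pi \Diamond E_\Pi \xp(X)$ out of $\Diamond E_\Pi \xp(X)$. The point is that \Cref{lem:eventualcomm} is a property of the \emph{model}: it holds at every point reachable under the flooding and fairness assumptions, not merely along the distinguished run, so the conclusion $\Diamond E_\Pi \xp(X)$ is itself known by every robot once $E_\Pi \xp(X)$ is witnessed. Making this ``knowledge of an eventuality'' argument precise against the semantics of $K_\robot$ and $\Diamond$ in \Cref{def:sem} is the step that requires care; everything else reduces to the modal validities above and to \Cref{lem:eventualcomm}, and closing the cycle Parallel $\Rightarrow$ Cooperative $\Rightarrow$ Selfish $\Rightarrow$ Parallel then gives the claimed equivalence of all three termination conditions.
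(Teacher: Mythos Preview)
Your proposal is correct and follows essentially the same cycle as the paper: the implications Cooperative $\Rightarrow$ Selfish $\Rightarrow$ Parallel are taken as immediate (you spell out the \textbf{S5} and temporal reasons, the paper simply asserts them), and Parallel $\Rightarrow$ Cooperative is obtained via \Cref{thm:termequiv}. Your observation that only the second half of that proof is needed---starting directly from $\Diamond D_\Pi \xp(X)$ rather than re-deriving it from exploration liveness and compactness---is a slightly cleaner packaging than the paper's bare citation of \Cref{thm:termequiv}, but the substance is the same.
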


\begin{proof}
\label{prf:all-term-equiv}
    Note that \remph{Cooperative Termination} $\Rightarrow$ \remph{Selfish termination} $\Rightarrow$ \remph{Parallel Termination}. Finally note that \Cref{thm:termequiv} shows that \remph{Parallel Termination} $\Rightarrow$ \remph{Cooperative Termination}, which completes the equivalence cycle.
\end{proof}

Our result can be extended to cover arbitrary communication, nevertheless, a communication liveness condition would still be necessary.

\begin{definition}[Communication liveness] \label{def:commlive}
    We say that a model $\mathcal{I}$ with an exploration space $X$ satisfies the communication liveness condition if for any run $\run$ of $\mathcal{I}$, any time $[t]$, any robot $\robot$, and any space statement $p_u$, $\mathcal{I}, (\run, [t]) \models K_\robot \xp(U) \Rightarrow \mathcal{I}, \run \models \Diamond E_{\Pi} \xp(U)$.
\end{definition}

\begin{theorem}\label{thm:termequivcommlive}
    Consider a model $\mathcal{I}$ with a compact exploration space $X$ such that $\mathcal{I}$ satisfies the exploration liveness and communication liveness conditions, and such that the robot system is exploration flooding. Then $\mathcal{I}$ satisfies the cooperative termination property from \Cref{def:parterm}, namely, $\mathcal{I} \models \Diamond E_{\Pi} \Diamond E_{\Pi} \xp(X)$
\end{theorem}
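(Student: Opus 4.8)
The plan is to recognize that this statement is essentially an abstraction of \Cref{thm:termequiv}: the communication liveness condition (\Cref{def:commlive}) isolates precisely the conclusion that \Cref{lem:eventualcomm} derived from the concrete $\LUMI$-with-full-visibility-and-no-crashes hypotheses. Indeed, \Cref{def:commlive} asserts that whenever any robot knows a space statement $\xp(U)$ at some point of a run, that knowledge eventually becomes mutual knowledge $E_\Pi \xp(U)$ along the same run; given perfect recall and stable space statements, this premise is interchangeable with the form $\Diamond K_\robot \xp(U) \Rightarrow \Diamond E_\Pi \xp(U)$ furnished by \Cref{lem:eventualcomm}. So I would replay the proof of \Cref{thm:termequiv} almost verbatim, replacing each appeal to \Cref{lem:eventualcomm} by an appeal to communication liveness.

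Concretely, I first invoke compactness together with exploration liveness: by \Cref{thm:explivesuff} we obtain $\mathcal{I} \models \Diamond \xp(X)$, which exploration agency upgrades to $\mathcal{I} \models \Diamond D_\Pi \xp(X)$. Fixing an arbitrary run $\run$, I choose a time $[t_0]$ with $\mathcal{I},(\run,[t_0]) \models D_\Pi \xp(X)$, and use exploration independence to write $X = \bigcup_{\robot \in \Pi} U_\robot$, where each $U_\robot$ is the maximal region satisfying $\mathcal{I},(\run,[t_0]) \models K_\robot \xp(U_\robot)$.

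Next I apply communication liveness to each $U_\robot$: from $K_\robot \xp(U_\robot)$ at $[t_0]$ it yields $\mathcal{I},\run \models \Diamond E_\Pi \xp(U_\robot)$. Since $\Pi$ is finite and space statements are stable, taking the latest of the corresponding times gives a single moment at which all the $U_\robot$ are simultaneously mutually known, so $\mathcal{I},\run \models \Diamond E_\Pi \xp(X)$. For the nested modality, I fix a time $[t_1]$ with $\mathcal{I},(\run,[t_1]) \models E_\Pi \xp(X)$ and establish $K_{\robot'}\Diamond E_\Pi \xp(X)$ for each robot $\robot'$: at every $(\run',[t'])$ indistinguishable from $(\run,[t_1])$ for $\robot'$ we have $\xp(X)$, whence (again via exploration agency, exploration independence, and communication liveness, now applied inside $\run'$) $\mathcal{I},(\run',[t']) \models \Diamond E_\Pi \xp(X)$; by the semantics of $K_{\robot'}$ this gives $\mathcal{I},(\run,[t_1]) \models K_{\robot'} \Diamond E_\Pi \xp(X)$, and universally quantifying over $\robot'$ yields $E_\Pi \Diamond E_\Pi \xp(X)$, hence $\mathcal{I} \models \Diamond E_\Pi \Diamond E_\Pi \xp(X)$ once $\run$ ranges over all runs.

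The one point requiring care, and the place where this argument genuinely exceeds a mechanical copy of \Cref{thm:termequiv}, is this inner step: I must re-derive $\Diamond E_\Pi \xp(X)$ not from the distinguished run but from an \emph{arbitrary} indistinguishable world $(\run',[t'])$. Here I rely on communication liveness being quantified over \emph{all} runs and times, so the decompose-then-flood argument of the first part transports unchanged; the implication $\xp(X) \rightarrow D_\Pi \xp(X)$ applied inside $\run'$ is exactly what lets me reconstruct the regions $U_\robot$ locally at $(\run',[t'])$ before invoking communication liveness. No compactness beyond what \Cref{thm:explivesuff} already supplies is needed for this inner step.
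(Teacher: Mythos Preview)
Your proposal is correct and follows exactly the paper's own strategy: the paper's proof is the one-liner ``reuse the proof of \Cref{thm:termequiv}, but using \Cref{def:commlive} instead of \Cref{lem:eventualcomm},'' and you have simply unrolled that instruction in detail. Your explicit treatment of the inner step at $(\run',[t'])$ is, if anything, more careful than the paper's terse appeal there.
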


\begin{proof}
\label{prf:termequivcommlive}
    Note that we can reuse the proof of \Cref{thm:termequiv}, but using \Cref{def:commlive} instead of \Cref{lem:eventualcomm}.
\end{proof}
% end exploration }}}

\subsection{Application to Surveillance} % {{{
\label{apx:surveillance}

In the surveillance task, we consider that an external agent might invade the robots space at a given time frame. If such an intruder happens to invade the space, then the robots should be able to detect it. In the following, we will show that the surveillance task can be formulated as a variant of the exploration task.

We will consider a normalized time frame for detection, represented by the closed interval $[0,1] \subset \mathbb{R}$. As with exploration, we consider a compact space $X$, which we will call our \remph{surveillance space}. Note that the product space of our surveillance space with the normalized time frame, $X \times [0,1]$ is a compact topological space with the product topology, which we will call the \remph{surveillance cylinder}.

Note that each region observed by the robots in the surveillance also induces a region in the surveillance cylinder by simply taking the time of the observations into account. For instance, if a robot explores a region $U$ of the surveillance space at a time $t$, then it already has explored $U\times t$ within the surveillance cylinder. Furthermore, if the intruder's speed is known to be bounded by a factor of $f$, then the exploration of $U$ at time $t_1$, implies that that at the next time $t_2$ at which the robot is active, $(U \setminus B_{f \cdot t} (\partial U)), t)_{t \in [t_1,t_2]}$ is also seen, where $B_{f \cdot t} (\partial U)$ is the ball of radius $f \cdot t$ around the boundary of $U$. This comes from the fact that if the intruder was not observed inside $U$ at time $t_1$, then the farthest that the intruder can reach at a time $t\in [t_1,t_2]$ is given by the ball of radius $f \cdot t$ around the boundary of $U$.

For the surveillance task, we will consider a set of propositional atoms consisting of space statements of the surveillance cylinder $X \times [0,1]$, namely $\propcyl$. Each of these space statements represent that a part of the surveillance cylinder was explored and no intruder was found. In addition to $\propcyl$, we also consider two additional propositional atoms, $\found$, and $\secure$. Thus, we can define a set of propositional atoms for surveillance, denoted by $\propsurv : = \propcyl \cup \{ \found, \secure \}$. Furthermore, we consider an epistemic temporal language for surveillance in the same way as we did for exploration, but with $\propsurv$ instead of $\propspace$. 

\begin{definition}[Language $\mathcal{L}_{surv}$]
\label{def:Lang-surv}
    The surveillance language is defined by the following grammar:

    $$ \varphi ::= p \: \vert \: \neg \varphi \: \vert \: (\varphi \wedge \varphi) \: \vert \: K_\robot \varphi \: \vert \: D_A \varphi \: \vert \: \Diamond \varphi$$

where $p \in \propsurv$, $\robot \in \Pi$, and $A \subseteq \Pi$.  
\end{definition}

The semantics of $\mathcal{L}_{surv}$ is analogous to the one of $\mathcal{L}_{space}$ in \Cref{def:sem}.
Note that the surveillance space is considered to be secure iff there does not exist a path $\xi:[0,1] \rightarrow X$ such that the trace of $\xi$. $tr(\xi) := \{ (\xi(t),t) \; \vert \; t \in [0,1]\}$ is not explored. A naive approach to secure the surveillance cylinder would be to consider the exploration on $X \times [0,1]$. However this is largely impractical, and often unfeasible, as it is equivalent to observing the space $X$ at all times during the time interval $[0,1]$. Instead, it is sufficient to explore a ``cut'' or section of the surveillance cylinder. For instance, if the surveillance space $X$ was explored completely at a fixed time time, i.e., if the set $X \times \{ t \}$ from the surveillance cylinder is explored, then the path of any intruder would intersect $X \times \{t\}$. More generally, if there is a manifold in the surveillance cylinder that has been observed, and can be expressed as the trace of a continuous function $f:X \rightarrow [0,1]$; $\trace(f):= \{ (x, f(x)) \; \vert \; x \in X\}$.

\begin{lemma}[Curve catching lemma]\label{lem:curvecatch}
    Let $\alpha : [0,1] \rightarrow X$ be an arbitrary curve in $X$, and $f:X \rightarrow [0,1]$ a continuous function. Then, the trace of $\alpha$, $\trace(\alpha):= \{ (\alpha(t), t) \; \vert \; t \in [0,1]\}$ and the trace of $f$, $\trace(f):= \{ (x,f(x)) \; \vert \; x \in X\}$ intersect.
\end{lemma}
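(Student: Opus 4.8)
The plan is to reduce the geometric intersection statement to a one-dimensional fixed-point problem and then dispatch it with the intermediate value theorem. First I would unpack what a common point of $\trace(\alpha)$ and $\trace(f)$ must look like. A point $(x,t)$ lies in $\trace(\alpha)$ exactly when $x = \alpha(t)$, and it lies in $\trace(f)$ exactly when $t = f(x)$. Combining these, any intersection point is determined by a single parameter $t \in [0,1]$ satisfying the scalar equation $t = f(\alpha(t))$; conversely, any such $t$ yields the genuine intersection point $(\alpha(t), t)$, which equals $(x, f(x))$ with $x = \alpha(t)$. Thus the lemma is equivalent to the existence of a fixed point of the composite map.

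Next I would define $g : [0,1] \to [0,1]$ by $g(t) = f(\alpha(t))$. Since $\alpha$ is a curve (hence continuous) and $f$ is continuous by hypothesis, $g$ is continuous as a composition, and its codomain is $[0,1]$ because that is the codomain of $f$. The task is now to show that the continuous self-map $g$ of the compact interval $[0,1]$ has a fixed point.

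I would then introduce the auxiliary function $h(t) = g(t) - t$, which is continuous on $[0,1]$. At the endpoints we have $h(0) = g(0) \ge 0$ and $h(1) = g(1) - 1 \le 0$, because $g(t) \in [0,1]$ for all $t$. By the intermediate value theorem there exists $t^\ast \in [0,1]$ with $h(t^\ast) = 0$, i.e. $f(\alpha(t^\ast)) = t^\ast$. Setting $x^\ast = \alpha(t^\ast)$, the point $(x^\ast, t^\ast)$ lies in $\trace(\alpha)$ by construction and in $\trace(f)$ since $t^\ast = f(x^\ast)$, so the two traces intersect.

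I do not expect a genuine obstacle here: the heart of the argument is the one-dimensional case of Brouwer's theorem, realized via the intermediate value theorem. The only points that warrant care are recording that a curve is continuous (so that $g$ is continuous) and verifying the equivalence between intersection points and fixed points in both directions, which is where the product structure of the surveillance cylinder does the work. Notably, the dimension of $X$ plays no role, so the same argument applies verbatim to the intended application where $X \subseteq [0,1]^n$.
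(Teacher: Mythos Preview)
Your proof is correct and follows essentially the same approach as the paper: both reduce the intersection problem to finding a fixed point of the continuous self-map $f\circ\alpha:[0,1]\to[0,1]$, with the paper invoking Brouwer's fixed-point theorem directly while you spell out the one-dimensional case via the intermediate value theorem applied to $h(t)=g(t)-t$. Your exposition is slightly more detailed in verifying the equivalence between intersection points and fixed points, but the underlying idea is identical.
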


\begin{proof}[Proof of \Cref{lem:curvecatch}] \label{prf:curvecatch}
Simply note that $f \circ \alpha: [0,1] \rightarrow [0,1]$ is a continuous function on $[0,1]$. Thus by the Brouwer's fixed-point theorem, there must exist a $t_0$ such that $f \circ \alpha (t_0) = t_0$. Let $\alpha(t_0) = y$. Thus $(y, t_0) \in \trace(\alpha)$ and by definition $(\alpha(t_0),f(\alpha(t_0))) \in \trace(f)$. Therefore $(y,t_0) \in \trace(\alpha)\cap \trace(f)$.
\end{proof}

Intuitively, this lemma shows that continuous manifolds on $X$ work as a ``catching net'' that is able to capture the path of any invader. Therefore, in addition to the conditions required for simple exploration (~\Cref{def:epi_cond}), we also consider the following conditions:

\begin{definition}[Found and secure conditions]\label{def:survcond}
\mbox{}\\
  \textbf{Detection Agency} Consider an arbitrary run $\run$ in a robot system, and an intruder path $\alpha: [0,1] \rightarrow X$. We say that $\found$ holds at a time $t$ iff either there exists a robot $p_i$ whose observation region at time $t$ intersects $\alpha$, or there is a time $t'< t$ such that $\found$ holds at time $t'$.\\
  \textbf{Observation Correctness} Consider an arbitrary run $\run$, and an intruder path $\alpha: [0,1] \rightarrow X$. Let $U$ be a region of the surveillance cylinder $X \times [0,1]$ such that $U \cap \trace(\alpha) \neq \varnothing$. Then $\neg \xp(U)$.\\
  \textbf{Securement} Let $f:X \rightarrow[0,1]$ be any continuous function. Let $U$ be a region of the surveillance cylinder $X \times [0,1]$, if $ \trace(f) \subseteq U$, then $\xp(U) \rightarrow \secure$.\\
  \textbf{Completion condition} $ \mathcal{I} \models (\secure \implies \Box \neg \found) \wedge (\found \implies \Box \neg \secure)$.
\end{definition}

Detection agency means that the $\found$ atom only holds if an intruder has already been detected by a robot. Observation correctness means that a region in the surveillance cylinder is not secure if an intruder is moving through a space-time point in the region, i.e., exploration regions in this context refer only to regions where no intruder has been found. Securement captures the idea from \Cref{lem:curvecatch} that shows that it is sufficient to explore the trace of a continuous manifold in $X$.
Completion condition states that if part of the surveillance cylinder was explored then the intruder was not found (and never will), and if the intruder was found then the surveillance cylinder was not secure (and never will be).
Also note that if the trace of the manifold has even a single discontinuity point, then this may be enough for an intruder to evade detection. 

We provide the following definition for the surveillance robot task (without termination):

\begin{definition}[Surveillance]\label{def:surv}
Consider a robot system and a model $\mathcal{I}$ consistent with \Cref{def:epi_cond} with respect to its surveillance cylinder, that satisfies \Cref{def:survcond}. Let $\run$ be a run of $\mathcal{I}$. We say that the model solves the surveillance task if $\mathcal{I}, \run \models \Diamond \found \vee \Diamond \secure$.
\end{definition}
Note that by Completion condition (\Cref{def:survcond}) $\secure$ and $\found$ atoms are mutually exclusive in a run.

As with the simple exploration task, we can consider multiple termination conditions for the surveillance task.
\begin{definition}[Termination for surveillance]\label{def:termsurv}
\mbox{}\\
    \textbf{Parallel Termination} $\mathcal{I}, \run \models \Diamond D_{\Pi} \found \vee \Diamond D_{\Pi} \secure$\\
    \textbf{Selfish Termination} $\mathcal{I}, \run \models \Diamond E_{\Pi} \found \vee \Diamond E_{\Pi} \secure$\\
    \textbf{Cooperative Termination} $\mathcal{I}, \run \models \Diamond E_{\Pi} \Diamond E_{\Pi} \found \vee \Diamond E_{\Pi} \Diamond E_{\Pi} \secure$
\end{definition}

Since \Cref{lem:curvecatch} shows that exploring a continuous manifold of the surveillance cylinder is enough to detect an intruder, it is also enough to fully secure the space, thus we can replace the exploration liveness condition to a manifold instance.

\begin{definition}[Manifold liveness] \label{def:manilive}
    We say that run $\run$ of a robot system has the manifold liveness condition iff  there exists a continuous function $f_\run X \rightarrow [0,1]$ such that $\run$ satisfies the exploration liveness condition for $\trace(f_\run)$.
\end{definition}

\begin{theorem}[Manifold liveness sufficiency]\label{thm:mansuff}
    Consider a robot system that satisfies the simple exploration conditions, \Cref{def:epi_cond}, for the surveillance cylinder $X \times [0,1]$, the surveillance conditions, \Cref{def:survcond}, and the manifold surveillance liveness condition, \Cref{def:manilive}. Then the surveillance task is solved at $\run$.
\end{theorem}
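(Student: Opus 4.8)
The plan is to reduce the surveillance task $\mathcal{I}, \run \models \Diamond\found \vee \Diamond\secure$ (\Cref{def:surv}) to the exploration machinery of \Cref{thm:explivesuff}, now applied to the manifold trace $\trace(f_\run)$ inside the surveillance cylinder, and then to dispatch the two disjuncts according to whether an intruder is present. First I would unpack \Cref{def:manilive}: manifold liveness hands me a continuous $f_\run : X \to [0,1]$ together with an open cover $\mathcal{U}$ of $\trace(f_\run)$ every element of which is eventually explored, i.e.\ $\mathcal{I}, \run \models \Diamond \xp(U)$ for each $U \in \mathcal{U}$. The crucial observation is that $\trace(f_\run)$ is compact, being the image of the compact space $X$ under the continuous section $x \mapsto (x, f_\run(x))$; this lets me extract a finite subcover $V_1, \dots, V_k$ exactly as in the proof of \Cref{thm:explivesuff}.

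Next I would assemble a single covering region. Using stability of explored space statements (the cylinder analogue of the stable-space-statement clause of \Cref{def:epi_cond}) together with the join-lattice structure of space statements, the finitely many facts $\Diamond\xp(V_i)$ can be synchronised at a common time $[t_{\max}]$, giving $\mathcal{I}, \run, [t_{\max}] \models \xp(U)$ for $U := \bigcup_{i=1}^{k} V_i \supseteq \trace(f_\run)$. Since $f_\run$ is continuous and $\trace(f_\run) \subseteq U$, the Securement clause of \Cref{def:survcond} yields $\xp(U) \rightarrow \secure$, hence $\mathcal{I}, \run, [t_{\max}] \models \secure$ and therefore $\mathcal{I}, \run \models \Diamond\secure$. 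This settles the case in which the robots manage to clear the whole manifold without meeting anyone.

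It remains to treat a run carrying an intruder path $\alpha : [0,1] \to X$. Here I would invoke the Curve Catching Lemma (\Cref{lem:curvecatch}): $\trace(\alpha)$ and $\trace(f_\run)$ must meet, say at $(y, t_0)$ with $y = \alpha(t_0)$ and $t_0 = f_\run(y)$. Because manifold liveness drives the robots to observe the cover cell $U_0 \in \mathcal{U}$ containing $(y, t_0)$, at the global time $t_0$ some robot's observation region contains the intruder's current position $\alpha(t_0)$, so it meets $\alpha$; Detection Agency (\Cref{def:survcond}) then forces $\found$ at $t_0$, whence $\mathcal{I}, \run \models \Diamond\found$. Combining the two cases gives the disjunction claimed by \Cref{def:surv}.

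The main obstacle I anticipate is the interface between Observation Correctness and manifold liveness: Observation Correctness forbids $\xp(U_0)$ whenever $U_0$ meets $\trace(\alpha)$, so the bare statement $\Diamond\xp(U_0)$ from manifold liveness cannot survive an intruder crossing $U_0$. The delicate point is therefore to read manifold liveness as a guarantee on the robots' observation schedule --- each manifold cell is eventually inspected --- and to argue that inspecting the crossing cell yields $\found$ via Detection Agency rather than the now-impossible $\xp$. Making this reading precise, and checking that the time-coordinate of the cylinder forces the inspection of $(y,t_0)$ to occur at the matching global time $t_0$ so that Detection Agency actually fires, is where the real care is needed; everything else is a transcription of the compactness-plus-stability argument already used for \Cref{thm:explivesuff}.
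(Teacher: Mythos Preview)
Your first block---manifold liveness, compactness of $\trace(f_\run)$, finite subcover, stability to a common time $[t_{\max}]$, then Securement---already delivers $\mathcal{I},\run \models \Diamond\secure$ unconditionally, and that alone establishes the disjunction $\Diamond\found \vee \Diamond\secure$ required by \Cref{def:surv}. This is exactly the paper's argument: its one-line proof cites \Cref{thm:explivesuff} (the compactness-plus-stability step you reproduce) and \Cref{lem:curvecatch} (the fact that justifies the Securement clause). Nothing in your first block actually invoked the hypothesis ``no intruder is present'', so no case split is needed.

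Your second block and the ``main obstacle'' paragraph are therefore superfluous, and the difficulty you identify is self-inflicted. You are right that in a run carrying an intruder $\alpha$, Observation Correctness blocks $\xp(U_0)$ for any cover cell $U_0$ meeting $\trace(\alpha)$; combined with \Cref{lem:curvecatch} this simply means the theorem's hypotheses (manifold liveness together with the surveillance conditions) are incompatible with an intruder in $\run$. But you do not have to argue that, and you certainly should not reinterpret manifold liveness as a guarantee about observation \emph{schedules}: the definition gives you $\Diamond\xp(U_0)$, a fact about space statements, not a promise that some robot's observation region contains $\alpha(t_0)$ at the specific global time $t_0$ that Detection Agency would require. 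So the $\Diamond\found$ branch as you sketch it does not go through---but fortunately it is never needed. Drop the case split and stop after Securement.
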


\begin{proof}[Proof of \Cref{thm:mansuff}] \label{prf:mansuff}
    Follows from \Cref{lem:curvecatch}, and \Cref{thm:explivesuff}.
\end{proof}

\begin{theorem} \label{thm:surv-coop}
    Consider a model $\mathcal{I}$ with a compact surveillance space $X$ that satisfies the communication liveness, the manifold liveness condition, the simple exploration conditions for the surveillance cone $X\times[0,1]$, the surveillance conditions, and the exploration flooding with respect to the surveillance cylinder. Then $\mathcal{I}$ satisfies the cooperative termination property.
\end{theorem}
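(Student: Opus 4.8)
The plan is to reduce cooperative termination for surveillance to the exploration analysis of \Cref{thm:termequiv,thm:termequivcommlive}, treating the swept manifold $\trace(f_\run)$ as the compact ``space'' that the robots collectively explore, and then converting explored regions into the completion flags via the surveillance conditions of \Cref{def:survcond}. First I would invoke \Cref{thm:mansuff}, whose hypotheses (the simple exploration conditions for the cylinder, the surveillance conditions, and manifold liveness) are all assumed here, to obtain the base dichotomy $\mathcal{I}, \run \models \Diamond \found \vee \Diamond \secure$. Since $\secure$ and $\found$ are mutually exclusive in a run (Completion condition), this splits the argument into a secure branch and a found branch, each of which I would boost to the matching disjunct $\Diamond E_\Pi \Diamond E_\Pi \secure$ or $\Diamond E_\Pi \Diamond E_\Pi \found$ of \Cref{def:termsurv}.

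For the secure branch I would mirror the proof of \Cref{thm:termequiv} with $\trace(f_\run)$ playing the role of $X$. As the continuous image of the compact $X$, the trace is compact, and in this branch (no intruder crossing) manifold liveness together with \Cref{thm:explivesuff} yields an open $U \supseteq \trace(f_\run)$, furnished by a finite subcover, with $\mathcal{I}, \run \models \Diamond \xp(U)$. Running the two-phase argument of \Cref{thm:termequiv} on $U$ — exploration agency to reach $\Diamond D_\Pi \xp(U)$, exploration independence to decompose the witnessing distributed knowledge into per-robot maximal regions whose union is $U$, and communication liveness (\Cref{def:commlive}, the substitute for \Cref{lem:eventualcomm} used in \Cref{thm:termequivcommlive}) to flood each region and then to nest — gives $\mathcal{I}, \run \models \Diamond E_\Pi \Diamond E_\Pi \xp(U)$. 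Because $\trace(f_\run) \subseteq U$, Securement makes $\xp(U) \rightarrow \secure$ valid in $\mathcal{I}$; since any point satisfying $\xp(U)$ satisfies $\secure$, this lifts through the monotone modal prefix $\Diamond E_\Pi \Diamond E_\Pi$ to $\mathcal{I}, \run \models \Diamond E_\Pi \Diamond E_\Pi \secure$.

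For the found branch I would use \Cref{lem:curvecatch}: the intruder path $\alpha$ and $\trace(f_\run)$ must intersect at some $(y,t_0)$, so the robot whose observation covers that point detects the intruder and, by Detection Agency, comes to know $\found$; Detection Agency together with the Completion condition also makes $\found$ stable. I would then replay the flooding-and-nesting skeleton of \Cref{thm:termequiv}: propagate $K_\robot \found$ to $\Diamond E_\Pi \found$, and then, fixing a time where $E_\Pi \found$ holds and an arbitrary robot $\robot'$, use $\textbf{S5}$ to transport $K_{\robot'} \found$ to every $\sim_{\robot'}$-indistinguishable point, re-apply the propagation there to get $\Diamond E_\Pi \found$, and conclude $K_{\robot'} \Diamond E_\Pi \found$, hence $E_\Pi \Diamond E_\Pi \found$ and finally $\mathcal{I}, \run \models \Diamond E_\Pi \Diamond E_\Pi \found$.

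The hard part will be the propagation step in the found branch. Communication liveness and exploration flooding are phrased for space statements $\xp(U)$, and the secure branch exploits exactly this, but $\found$ is not a space statement — indeed, by Observation Correctness the region holding the intruder is $\neg\xp$ — so flooding clear regions does not by itself carry $\found$. I would therefore read exploration flooding ``with respect to the surveillance cylinder'' as also flooding the completion flags of $\mathcal{L}_{surv}$, which yields the required analogue $K_\robot \found \Rightarrow \Diamond E_\Pi \found$ by the same full-visibility, no-crash argument as \Cref{lem:eventualcomm}. A secondary, purely technical point is that $\trace(f_\run)$ is not open; I handle this throughout by working with the open neighbourhood $U$ from the finite subcover rather than with $\trace(f_\run)$ itself.
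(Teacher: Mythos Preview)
Your proposal is correct and follows essentially the paper's own argument: invoke \Cref{thm:mansuff} for the dichotomy, treat the secure branch by running the \Cref{thm:termequivcommlive} machinery on the manifold's trace and then applying Securement, and treat the found branch by identifying a detecting robot and propagating $\found$ through two layers of $E_\Pi$. Two minor deviations: the paper extracts the detecting robot directly from Detection Agency (\Cref{def:survcond}) without the detour through \Cref{lem:curvecatch}, and for propagating $\found$ it simply invokes communication liveness (\Cref{def:commlive}) rather than exploration flooding --- your worry that \Cref{def:commlive} is literally stated only for space statements is well taken, but the paper treats it as applying to $\found$ all the same.
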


\begin{proof}[Proof of \Cref{thm:surv-coop}] \label{prf:surv-coop}
Let $\run$ be a run of $\mathcal{I}$, from \Cref{thm:mansuff}, $\mathcal{I}, \run \models \Diamond \found \vee \Diamond \secure$.
Recall that by Completion condition (\Cref{def:survcond}) $\secure$ and $\found$ are mutually exclusive in a run.
First, let us assume that $\mathcal{I}, \run \models \Diamond \found$. From \Cref{def:survcond}, there must exist a robot $\robot_i$ such that finds an intruder at a time $t_i$. Since $\robot_i$ detects the intruder, it holds that $\mathcal{I}, \run, t_i \models K_{\robot_i} \found$. From \Cref{def:commlive}, it holds that $\mathcal{I} \run \models \Diamond E_{\Pi} \found$. In particular, in all runs $\run'$ that are indistinguishable to $\robot_i$, $\mathcal{I}, \run' \models \Diamond E_{\Pi} \found$. Therefore, $\mathcal{I}, \run, t_i \models K_{\robot_i} \Diamond E_{\Pi} \found$. Since $\mathcal{I}, \run \models \Diamond E_{\Pi} \found$, then we can repeat the previous argument for any other robot. Therefore $\mathcal{I}, \run \models \Diamond E_{\Pi} \Diamond E_{\Pi} \found$.

Now, assume that $\mathcal{I}, \run \models \Diamond\secure$ holds. Then, it follows that no intruder was found. From \Cref{def:manilive}, it holds that there exists a continuous manifold $f_\run:X \rightarrow [0,1]$, such that $\mathcal{I}, \run \models \Diamond p_{\trace(f_\run)}$. Note that the conditions for \Cref{thm:termequivcommlive} hold, therefore, $\mathcal{I}, \run \models \Diamond E_{\Pi} \Diamond E_{\Pi} p_{\trace(f_\run)}$. From the securement condition from \Cref{def:survcond}, it follows that $\mathcal{I}, \run \models \Diamond E_{\Pi} \Diamond E_{\Pi} \secure$.
\end{proof}

\begin{corollary}
    In particular for the $\LUMI$ model with full visibility, if the robot system satisfies the manifold liveness, then it solves the surveillance task with cooperative termination
\end{corollary}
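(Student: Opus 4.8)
The plan is to read this corollary as the direct specialization of \Cref{thm:surv-coop} to the luminous regime, where the only nontrivial point is that full visibility in the \LUMI\ model supplies the \emph{communication liveness} hypothesis essentially for free. Recall that \Cref{thm:surv-coop} requires five ingredients — communication liveness, manifold liveness, the simple exploration conditions for the cylinder $X \times [0,1]$, the surveillance conditions, and exploration flooding — so the task of the proof is to verify that each of these is available here, after which one invokes that theorem verbatim.

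First I would dispatch the conditions that are immediate: manifold liveness is given by hypothesis, while the simple exploration conditions (\Cref{def:epi_cond}), the surveillance conditions (\Cref{def:survcond}), and exploration flooding are the standing assumptions fixing the meaning of the surveillance task and the communication protocol in this regime, so I would take these as part of the \LUMI\ surveillance setup rather than re-derive them. The substantive step is communication liveness, and this is exactly what full visibility buys. Here I would apply \Cref{lem:eventualcomm}, whose hypotheses are a \LUMI\ system with full visibility and exploration flooding; it yields $\mathcal{I}, \run \models \Diamond K_\robot \xp(U) \Rightarrow \mathcal{I}, \run \models \Diamond E_{\Pi} \xp(U)$ for every run, robot, and space statement. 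Since $K_\robot \xp(U)$ holding at some $(\run,[t])$ entails $\Diamond K_\robot \xp(U)$ at $\run$, this is precisely the communication liveness condition of \Cref{def:commlive}.

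With all five hypotheses in hand, I would apply \Cref{thm:surv-coop} directly to conclude $\mathcal{I} \models \Diamond E_{\Pi} \Diamond E_{\Pi} \found \vee \Diamond E_{\Pi} \Diamond E_{\Pi} \secure$, that is, cooperative termination; that the underlying surveillance task itself is solved ($\Diamond \found \vee \Diamond \secure$) then follows from \Cref{thm:mansuff}, which already combines manifold liveness with \Cref{lem:curvecatch}.

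The main obstacle I anticipate is a matter of transfer of domain: \Cref{lem:eventualcomm} is phrased for an exploration space $X$, whereas here the relevant space statements live in the surveillance cylinder $\propcyl$ over $X \times [0,1]$. I would justify that the lemma applies unchanged by noting that $X \times [0,1]$ is again a compact topological space and that the flooding argument of \Cref{lem:eventualcomm} relies only on two structural facts — that robots with full visibility eventually read one another's lights (no crash failures), and that the space statements form a join-lattice so that flooding the maximal known region dominates any sub-region — both of which hold verbatim over the cylinder. Once this transfer is made explicit, the corollary is immediate.
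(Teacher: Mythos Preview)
Your approach is correct and matches the paper's intent: the paper states this corollary with no proof at all, treating it as immediate from \Cref{thm:surv-coop}, and your argument makes explicit exactly the step the paper leaves implicit — that \LUMI\ with full visibility and exploration flooding yields communication liveness via \Cref{lem:eventualcomm}, after which \Cref{thm:surv-coop} applies directly. Your attention to the domain transfer (cylinder $X\times[0,1]$ versus $X$) is more care than the paper exercises, but the justification you give is sound.
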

% surveillance }}}

\subsection{Application to Approximate Gathering} % {{{
\label{sec:gathering}

In the approximate gathering task, robots needs to ensure that their position and all other robots positions' are eventually contained in a specific open set, called the \remph{rendezvous region}.
For this epistemic modeling we need to extend the language of \Cref{def:lang-exp} with additional atoms:

\begin{itemize}
    \item \textbf{Positional atoms} $pos_\robot(x) \in \propobs \subseteq X$ (i.e., a finite subset of the domain of the observation function $\gobs$ in \Cref{def:robot-ode}) with $|\propobs| \in \N$;

    \item \textbf{Initial position atoms} $init(pos_\robot(x)) \in \propobs$, meaning the starting position of robot $\robot$;

    \item \textbf{Inclusion atoms} $(x\in U) \in \proptopo$ representing the inclusion of points in open sets in the exploration space $\in_{x,U}$;
\end{itemize}

We define the set of \textit{Rendezvous region} $\proprdvz \in 2^X$, which are a a \remph{finite} set of regions of space such that $\forall U, V \in \proprdvz: U \cap V = \emptyset$ and $|\proprdvz| \in \N$. $\vec{x}\in\propspace^\Pi$ denotes a vector with a position for each robot. 
The following grammar defines temporal epistemic logic formulas for approximate gathering.

\begin{definition}[Language $\mathcal{L}_{gather}$]
\label{def:lang_approx_gathering}
  The gathering language is defined by the following grammar: \[\varphi ::= p \: \vert \: \neg \varphi \: \vert \: (\varphi \wedge \varphi) \: \vert \: K_r \varphi \: \vert \: D_A \varphi \: \vert \: \Diamond \varphi\] where $p \in \proptopo \cup \propobs$ , $\robot \in \Pi$, and $A \subseteq \Pi$. When obvious from the context, we will omit the subscript $X$.
\end{definition}

The semantics of $\mathcal{L}_{gather}$ is analogous to the one of $\mathcal{L}_{exp}$ in \Cref{def:sem}, except for the newly introduced atoms:

\begin{itemize}
  \item $\mathcal{I},(\run,[t]) \models pos_\robot(x) \text{ iff } pos_\robot(x) \in \pi(\run,[t])(pos_\robot(x))$

  \item $\mathcal{I},(\run,[t]) \models init(pos_\robot(x)) \text{ iff } pos_\robot(x) \in \pi(\run,0)(pos_\robot(x))$

  \item $\mathcal{I},(\run,[t]) \models (x \in U) \text{ iff } (x \in U) \in \pi(\run,[t])(x \in U)$
\end{itemize}

Positional atoms $pos_\robot(x)$ mean that robot $\robot$ is at actual position $\onc$ discretized to $x$. As $\gobs$ is surjective, it is possible that $\gobs_\robot(\onc) = \gobs_\robot(\onc')$ even though $\onc \neq \onc'$, meaning two different actual robot positions result in the same position for that robot.
We assume that robots know their own position: $pos_\robot(x) \leftrightarrow K_\robot pos_\robot(x)$.

The approximate gathering is solved if all robots' positions are contained in the \remph{same} rendezvous region forever, assuming each robot starts in one rendezvous region. To do so, they need to agree on one rendezvous region and move here. They are allowed to change the rendezvous region finitely often, but eventually need to stick to one forever.
We require it to be impossible for any robot to intersect two rendezvous regions at the same time, i.e., $pos_r(x)\wedge(x\in U)\implies(\bigwedge_{U\neq V\in\proprdvz}\neg(x\in V))$.

\begin{definition}[Approximate gathering] \label{def:term_approx_gathering}
  System $\mathcal{I}$ is consistent with approximate gathering iff the following two conditions hold:

  \begin{itemize}
    \item Eventual Gathering: There is a rendezvous region where all robots eventually meet forever 
        $$\mathcal{I} \models \bigvee_{U \in \proprdvz}
        \bigwedge_{\robot\in\Pi} \Diamond \Box \bigvee_{x \in \propobs} pos_\robot(x) \wedge (x\in U)$$
    \item Starting Validity: If all robots are already contained in a rendezvous region initially, that is also the final rendezvous region 
        $$\mathcal{I} \models \bigwedge_{\substack{U \in \proprdvz, \vec{x} \in \propobs^\Pi}} \bigl((\bigwedge_{r\in\Pi} init(pos_\robot(x_\robot))) \wedge (x_\robot\in U) \bigr)\rightarrow \bigl(\Diamond \Box \bigvee_{\vec{y} \in \propobs^\Pi} ( \bigwedge_{\robot\in\Pi} pos_\robot(y_\robot)) \wedge (y_\robot\in U) \bigr).$$
    \end{itemize}
\end{definition}

These conditions strongly resemble the stabilizing agreement conditions, epistemically characterized in \cite{crgfessli2024}. It is defined by the following formulas, where $\mathcal{V}$ is the set of values to choose from, the atom $choose_a(v)$ denotes that agent $a$ chooses value $v$. $decide_a(v)$ is defined as $\Box choose_a(v)$, where $a,b$ are \remph{agents}, i.e., robots with no dynamical constraints:

\begin{definition}[Stabilizing agreement \cite{crgfessli2024}] \label{def:Ef_ST}
  System $\mathcal{I}$ is consistent with stabilizing agreement if \remph{(Agreement)} and \remph{(Validity)} hold.
  \begin{itemize}
    \item Agreement: There is a value such that every agent decides on that value 
       $$ \mathcal{I} \models \bigvee_{v \in \mathcal{V}} \bigwedge_{a \in \Pi} \Diamond decide_a(v)$$
    \item Validity: An agent can only choose a known initial value of some agent 
      $$\mathcal{I} \models \bigwedge_{v \in \mathcal{V}} \bigl(choose_a(v) \rightarrow   K_a \bigvee_{b \in \Pi} init_b (v) \bigr)$$
\end{itemize}
\end{definition}

Observe that covering a rendezvous region can be seen as that robot choosing that region. Rendezvous regions are the objects the robots choose from. As each robot initially sits in exactly one region, that region can be seen as the problem input. The last obstacle is reducing the \remph{strong validity}, as its called in the distributed computing literature (see for example \cite{csrjv23}), to the \remph{weak validity} of approximate gathering.

\begin{theorem}\label{thm:stabagapproxgather}
  Consider a model $\mathcal{I}$ with a space $X$, together with the language defined \Cref{def:lang_approx_gathering}, an arbitrary run $\run$ from $\mathcal{I}$. If $\mathcal{I}$ is consistent with stabilizing agreement, then it is also consistent with approximate gathering.

  Consider setting the set of possible choices $\mathcal{V}$ to be the set of possible rendezvous regions $\proprdvz$, setting $choose_\robot(U) \leftrightarrow \bigvee_{x\in\propobs} pos_\robot(x) \wedge (x\in U)$ and $decide_\robot(U) = \Box choose_\robot(U)$, and setting $init_\robot(U) \leftrightarrow \bigvee_{x\in\propobs} init(pos_\robot(x))\wedge(x\in U)$.
\end{theorem}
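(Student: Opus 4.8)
The plan is to verify that, under the dictionary supplied in the statement, the two \emph{stabilizing agreement} conditions translate into the two \emph{approximate gathering} conditions. I would dispatch \emph{Eventual Gathering} first, since it is immediate, and then spend the effort on \emph{Starting Validity}, which is where the promised reduction of strong to weak validity lives.

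For \emph{Eventual Gathering}, I would simply substitute $choose_\robot(U)\leftrightarrow\bigvee_{x\in\propobs}pos_\robot(x)\wedge(x\in U)$ and $decide_\robot(U)=\Box choose_\robot(U)$ into the \emph{Agreement} formula instantiated with $\mathcal V=\proprdvz$. Since $\Box\bigvee_{x\in\propobs}pos_\robot(x)\wedge(x\in U)=\Box choose_\robot(U)=decide_\robot(U)$, the valid formula $\bigvee_{U\in\proprdvz}\bigwedge_{\robot\in\Pi}\Diamond decide_\robot(U)$ is syntactically the \emph{Eventual Gathering} formula, so it holds verbatim. I would note along the way the routine distributivity $\bigvee_{\vec y\in\propobs^{\Pi}}\bigwedge_{\robot\in\Pi}\bigl(pos_\robot(y_\robot)\wedge(y_\robot\in U)\bigr)=\bigwedge_{\robot\in\Pi}choose_\robot(U)$, valid because each robot's position is chosen independently.

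For \emph{Starting Validity} I would fix a point $(\run,[t])$ satisfying its antecedent, namely that every robot starts inside one fixed region $U\in\proprdvz$; note that this antecedent is time-independent, since $init$ and inclusion atoms are evaluated at time $0$ and geometrically, respectively. From the antecedent, $init_\robot(U)$ holds for all $\robot$, while disjointness of rendezvous regions together with uniqueness of each robot's initial position forces $init_\robot(V)$ to fail for every $V\neq U$. By \emph{Agreement} there is a single region $V^{*}$ with $\bigwedge_{\robot\in\Pi}\Diamond\Box choose_\robot(V^{*})$. Picking any robot $a$ and a time at which $choose_a(V^{*})$ holds, \emph{Validity} yields $K_a\bigvee_{b\in\Pi}init_b(V^{*})$, whence $\bigvee_{b\in\Pi}init_b(V^{*})$ by factivity; combined with the observation that $U$ is the only region any robot starts in, this forces $V^{*}=U$. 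Finally, since $\Pi$ is finite, the family $\{\Diamond\Box choose_\robot(U)\}_{\robot\in\Pi}$ can be collected past the maximum of the finitely many stabilization times into $\Diamond\Box\bigwedge_{\robot\in\Pi}choose_\robot(U)$, which is exactly the translated consequent $\Diamond\Box\bigvee_{\vec y\in\propobs^{\Pi}}\bigwedge_{\robot\in\Pi}pos_\robot(y_\robot)\wedge(y_\robot\in U)$.

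I expect the single genuine obstacle to be the appeal to factivity in deducing $\bigvee_{b\in\Pi}init_b(V^{*})$ from $K_a\bigvee_{b\in\Pi}init_b(V^{*})$: the paper stresses that the knowledge operator is only \textbf{S5} with respect to observations and \emph{not} with respect to the real world, so I would have to argue explicitly that $init$ atoms are exempt from this caveat because their truth is pinned to the fixed time $0$ and is therefore stable along the run, making factivity sound here. The remaining pieces — the verbatim identification of \emph{Eventual Gathering} with \emph{Agreement} and the finite exchange of $\Diamond\Box$ with $\bigwedge_{\robot\in\Pi}$ — are routine.
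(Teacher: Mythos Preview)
Your proposal is correct and follows essentially the same route as the paper: both obtain \emph{Eventual Gathering} by literal substitution into \emph{Agreement}, and both derive \emph{Starting Validity} by assuming the common initial region $U$, invoking \emph{Agreement} to produce a stabilizing region $V^{*}$, and then using \emph{Validity} to force $V^{*}=U$. The paper phrases the last step contrapositively (the consequent of \emph{Validity} is false for every $W\neq U$, hence so is its antecedent $choose_\robot(W)$), while you run it forward via factivity; these are the same argument, and your explicit discussion of why factivity is legitimate for the time-pinned $init$ atoms, together with the finite $\bigwedge_{\robot}\Diamond\Box\to\Diamond\Box\bigwedge_{\robot}$ exchange, actually fills in steps the paper leaves implicit.
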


\begin{proof}
\label{prf:stabagapproxgather}
  We start by substituting for $choose$ and $init$ as defined in \Cref{def:Ef_ST}. For (Agreement) we immediately arrive at (Eventual Gathering):

  \begin{equation}\label{eq:proofagreement}
    \mathcal{I} \models \bigvee_{U\in\proprdvz} \bigwedge_{\robot\in\Pi} \Diamond \Box \bigvee_{x\in\propobs} pos_\robot(x) \wedge (x\in U).
  \end{equation}

  For (Starting Validity) we need to provide more reasoning, consider the following, where we already prepended the validity with a conjunction over all robots:

  \begin{equation}\label{eq:proofassumption}
    \begin{split}
      \mathcal{I} \models \bigwedge_{\substack{\robot\in\Pi,\\ U\in\proprdvz}}
          \bigl((&\bigvee_{x\in\propobs} pos_\robot(x) \wedge (x\in U)) \rightarrow \\
          (K_\robot &\bigvee_{\robot' \in \Pi} \bigvee_{x\in\propobs} init(pos_{\robot'}(x))\wedge(x\in U)) \bigr).
    \end{split}
  \end{equation}

  As we need to prove an implication, we assume its antecedent and derive the consequent. Assuming
  \begin{equation}\label{eq:proofantecedent}
    \bigwedge_{\robot\in\Pi} init(pos_\robot(x_\robot)) \wedge (x_\robot\in U),
  \end{equation}
  we want to derive
  \begin{equation}\label{eq:proofconsequent}
     \bigl( \Diamond \Box \bigvee_{\vec{y} \in \propobs^\Pi} ( \bigwedge_{\robot\in\Pi} pos_\robot(y_\robot) ) \wedge ( y_\robot \in U )\bigr),
  \end{equation}
  for any $U\in\proprdvz$ and any $\vec{x}\in\propobs^\Pi$. Choose any $U$, $\vec{x}$ which satisfies \Cref{eq:proofantecedent}, meaning a starting position vector that intersects over $U$. Note that there is \remph{only one} such $U$ as initial rendezvous regions are unique.

  Observe that, as robots know their own initial positions, \Cref{eq:proofantecedent} implies the antecedent in \Cref{eq:proofassumption} for all robots $\robot$:
  \begin{equation}\label{eq:proofstep3}
  \begin{split}
               &init(pos_\robot(x_\robot)) \wedge (x_\robot \in U) \implies\\
      K_\robot &init(pos_\robot(x_\robot)) \wedge (x_\robot \in U) \implies\\
      K_\robot \bigvee_{\robot'\in\Pi}   &init(pos_{\robot'}(x_{\robot'})) \wedge (x_{\robot'} \in U) \implies\\
     (K_\robot \bigvee_{\robot'\in\Pi} \bigvee_{y\in\propobs} &init(pos_{\robot'}(y_{\robot'}))\wedge(y_{\robot'} \in U)).
  \end{split}
  \end{equation}
  Therefore the consequent in \Cref{eq:proofassumption} is true for only one choice of $U$ (the same that satisfies \Cref{eq:proofantecedent}) and false for all other $V \in \proprdvz$.

  By \Cref{eq:proofagreement} there exists some $V$, $t$ such that
  \begin{equation}\label{eq:proofstep1}
      \mathcal{I}, (\run,t) \models \bigvee_{x\in\propobs} pos_\robot(x) \wedge (x\in V)
  \end{equation}
  holds for all $t'>t$ and for all robots $r$. We see that \Cref{eq:proofstep1} is precisely the antecedent of \Cref{eq:proofassumption}, and, as \Cref{eq:proofassumption} is a validity, it is satisfied by the fixed choice for $V$, $t$. Since the consequent in \Cref{eq:proofassumption} \remph{is true only} for $U$ as derived in \Cref{eq:proofstep3} (but \Cref{eq:proofassumption} is a validity), \Cref{eq:proofstep1} has to be false for any $W \neq U$. But by the previous reasoning we know that \Cref{eq:proofstep1} is true for at least one $V$ for all $t'>t$, implying that $U = V$ for all $t'>t$. We can now derive \Cref{eq:proofconsequent} from \Cref{eq:proofstep1}:
  \begin{equation}
  \begin{split}
      \bigvee_{x\in\propobs} &pos_\robot(x) \wedge (x\in U) \\
      \bigvee_{\vec{y} \in \propobs^\Pi} ( \bigwedge_{\robot\in\Pi} &pos_\robot(\vec{y}_r) ) \wedge (x \in \vec{y}_r )\bigr) \\
      \bigl( \Diamond \Box \bigvee_{\vec{y} \in \propobs^\Pi} ( \bigwedge_{\robot\in\Pi} &pos_\robot(\vec{y}_r)) \wedge (x \in \vec{y}_r )\bigr),
  \end{split}
  \end{equation}
  meaning the robots gather at the initially intersected rendezvous region forever.
\end{proof}
% end gathering }}}

% robot tasks }}}

\section{Conclusions and future work} % {{{
\label{sec:concl}
In this paper, we developed a framework that seamlessly integrates control theory, distributed systems and temporal epistemic logic in the context of multi-robot systems, with applications to exploration, gathering and surveillance tasks. We present, at the same time, new complementary perspectives for modeling distributed robot systems as dynamical systems and as state machines. The novel separation of a robot state into ontic and epistemic parts is exploited by developing a common understanding of scheduler that ties their local executions to a global time. We extend this first bridge between control theory and distributed computing by also providing access to the epistemic reasoning present in distributed computing, now adapted for multi-robot tasks and their particularities, namely the fact that robots exist in some (physical) space. This reasoning arena takes the form of an epistemic model capable of expressing the knowledge of the system at their computation operations. A temporal epistemic logic of space is employed and its usefulness made evident with applications to concrete exploration and gathering tasks, for both of which sufficient epistemic conditions are derived.

This work has multiple implications and holds the promise of future developments in exchanging tools and results across fields.  One of the most striking implications is that the classical luminous robot model \cite{DasFlocchiniPrencipeSantoroYamashita16} with full visibility is very strong in a distributed computing sense, as confirmed by the existence of a whole research field on establishing and maintaining communication in multi robot systems.  Some, such as \cite{BB23,BGB23} have leveraged epistemic planning and action models in order to keep a live communication.  As we have identified a sufficient condition for approximate gathering, we can directly judge the correctness of an action model, and therefore the correctness of a protocol, by its ability to reach a given specification in epistemic terms. 

The results of this paper have already been extended for the task of surveillance, as can be found in \Cref{apx:surveillance}, where a group of robots must monitor a region of space to be absent of intruders.
Other interesting extensions include (but are not limited to) considering other robot tasks and their similarities to distributed tasks, as well as extending the logic framework with deontic modalities expressing  robots' permissions and obligations during protocol execution given their current local knowledge. Furthermore, our framework is capable of integrating the geometrical interpretation \cite{Roman} of epistemic frames and the recent advances \cite{RiessGhrist22} on sheaf theoretic and homological aspects of task solvability, however this is out of scope for this paper, and could be the focal interest of follow-up work.
% end Conclusion }}}

% References {{{
\bibliography{lit}
% }}}

\appendix % {{{
\section{Taxonomy of \textsc{look-compute-move}} \label{apx:lcm} % {{{

There are several possible choices regarding the capacities of robots. The following \Cref{tab:glossary_lcm} summarizes the most relevant variations for robots under the \textsc{look-compute-move} abstraction. They are chosen in order to make a model mode or less powerful, with additional assumptions on their communication, observation and movement capacities.

\begin{table}[ht!]
  \caption{Partial glossary of \LCM terminology.} \label{tab:glossary_lcm}
  \begin{tabular*}{\columnwidth}{llp{7cm}}
    \toprule
    Property & Variant & Description  \\
    \midrule
    Memory & \textbf{Oblivious} & No memory, all information is lost between rounds \\
           & \textbf{Luminous} & Lights can be used to preserve information across rounds \\
    \midrule
    Communication & \textbf{Silent} & No communication between robots \\
                  & \textbf{Luminous} & Information passed indirectly by retrieving lights during \look \\
%                  Robots are equipped with lights for indirect communication \\
    \midrule
    Visibility & \textbf{Myopic} & \look retrieves a partial snapshot parametrized by view distance\\
               & \textbf{Unlimited visibility} & \look retrieves a snapshot of the entire environment \\
    \midrule
    Identity & \textbf{Anonymous} & No unique identifier \\
             & \textbf{Indistinguishable} & No external identifier \\
    \midrule
    Movement & \textbf{Rigid} & \move is always executed fully \\
             & \textbf{Non-rigid} & \move can be interrupted, parametrized by a minimum distance \\
    \midrule
    Dynamics & \textbf{Holonomic} & \move can be any arbitrary destination \\
             & \textbf{Non-holonomic} & \move has to respect dynamic constraints, e.g. Dubins' car \\
    \midrule
    Orientation & \textbf{Oriented} & There is a global coordinate system that all can access \\
                & \textbf{Disoriented} & Only different local coordinate systems \\
                & \textbf{Chiral} & Different local coordinate systems, but with common north \\
    \midrule
    Synchronicity & \textbf{ASYNC} & Scheduler, no shared notion of time \\
              & \textbf{k-ASYNC} & Scheduler, the robots are at most \(k\) rounds apart \\
              & \textbf{SSYNC} & Scheduler, a subset of the robots activated at each round \\
              & \textbf{FSYNC} & Scheduler, all robots are activated at each round \\
    \bottomrule
  \end{tabular*}
\end{table}  % tab glossary lcm

% end Taxonomy }}}
% end Appendix }}}
\end{document}